\newtheorem{theorem}{Theorem}
\newtheorem{lemma}{Lemma}
\newtheorem{remark}{Remark}
\newtheorem{assum}{Assumption}
\DeclareMathOperator*{\argmin}{\arg\!\min}
\DeclareMathOperator*{\argmax}{\arg\!\max}
\newcommand{\minitab}{\hspace*{2em}}
\begin{document}
%
\title{Optimal Auction Design for Flexible Consumers}
%
%
%

\author{Shiva~Navabi,~\IEEEmembership{Student Member,~IEEE,}
        Ashutosh~Nayyar,~\IEEEmembership{Member,~IEEE}
\thanks{S. Navabi and A. Nayyar are with the Electrical Engineering Department, University of Southern California, 3740 McClintock Avenue, Los Angeles, CA, 90089 USA (E-mail: navabiso@usc.edu; ashutosn@usc.edu.)}
}

\maketitle

\begin{abstract}

We study the problem of designing revenue-maximizing auctions for allocating multiple goods to flexible consumers. In our model, each consumer is interested in a subset of goods known as its flexibility set and wants to consume one good from this set. A consumer's flexibility set and its utility from consuming a good from its flexibility set are its private information.  We focus on the case of nested flexibility sets --- each consumer's flexibility set can be one of   $k$ nested sets. We provide several examples where such nested flexibility sets may arise. We characterize the allocation rule for an incentive compatible, individually rational and revenue-maximizing auction in terms of  solutions to  integer programs. The corresponding payment rule is described by an integral equation. We then leverage the nestedness of flexibility sets to simplify the optimal auction and provide a complete characterization of allocations and payments in terms of simple thresholds.

\end{abstract}

\begin{IEEEkeywords}
Revenue maximization, Bayesian incentive compatibility, flexible demand, optimal auction.
\end{IEEEkeywords}

%
\IEEEpeerreviewmaketitle

\section{Introduction}\label{sec:intro}

The problem of allocating limited resources among multiple users arises frequently in a wide array of applications ranging from communication networks to transportation and power systems.  In many such applications, the users are selfish agents with private information about their preferences and constraints.  Finding a desirable allocation of resources would typically require at least a partial knowledge of  users' private preferences and constraints.  The users, however,  can behave strategically in revealing their private information to benefit themselves at the expense of other users and/or the owner of the resources being allocated.  Thus, the presence of strategic users with private information creates two key challenges for the resource allocation problem: (i) the allocation needs to be based on the information revealed by the users; (ii) the allocation procedure must anticipate users' strategic behavior in the revelation of their private information. 
 The economic theory of mechanism design provides a framework for addressing such resource allocation problems.
 
 Auctions provide one of the simplest settings of a mechanism design problem. 
 An auctioneer/mechanism designer would typically ask for bids from potential customers, and allocate resources and charge payments as a function of the received bids. Customers with private information about their utilities can be strategic about what bids they submit. The auction design problem is to find suitable allocation and payment functions, which map the customers' bids to allocations and payments, so that the auctioneer can achieve some desired objective. Typically, the auctioneer's objectives are either maximization of its revenue or maximization of social welfare.
 
 In this paper, we consider the problem of designing revenue-maximizing auctions for multiple goods and flexible consumers. Consumer flexibility about goods can arise in different scenarios. In demand  response programs of electric utilities, some consumers may be flexible about when and at what rate they receive power. In airline/hotel reservation settings, customers may be flexible about their travel dates. The seller of these goods/services should be able to take this flexibility into account to improve its profits.  In our setup, each consumer is associated with a  \emph{flexibility set} that describes the subset of goods the consumer is equally interested in. Each consumer wants to consume \emph{one good from its flexibility set}. The flexibility set of a consumer  and the utility it gets from consuming a  good from its flexibility set are both its private information. 
 
 We focus on the case of nested flexibility sets --- each consumer's flexibility set can be one of the  $k$ sets, $\mathcal{B}_1, \mathcal{B}_2,\ldots, \mathcal{B}_k$, which are nested in the following way:
\begin{equation}\label{Bsets}
\mathcal{B}_1 \subset \mathcal{B}_2 \subset \cdots \subset \mathcal{B}_{k}.
\end{equation}
If consumers' flexibility  sets are truthfully revealed to the auctioneer, the nestedness in \eqref{Bsets} allows the auctioneer to  compare consumer flexibility and say whether a given consumer is more, less or equally flexible as another consumer.
\vspace{-1em}
\subsection{Examples of Nested Flexibility}
There are several markets where consumer flexibility resembles the nested pattern in \eqref{Bsets}. For example,  consider flexible electricity consumers that need one unit of energy within a certain deadline \cite{7548363}. Let $\mathcal{B}_{\tau}$ denote the set of energy units available for delivery in the interval $[0,\tau]$, $\tau=1\ldots,k$. Clearly, $\mathcal{B}_1 \subset \mathcal{B}_2 \subset \cdots \subset \mathcal{B}_k$.  A consumer who needs one unit of energy with a deadline of  $2$  can be seen as having $\mathcal{B}_2$ as its flexibility set, that is, it needs one good from $\mathcal{B}_2$. A similar flexibility model appears in auctions with  deadline-based goods such as airline ticket auctions where different customers may have different departure deadlines.

  As another example,  consider electricity consumers that need to receive a fixed amount of energy within a fixed time interval while having certain constraints on the  rate at which they can receive energy. Suppose each consumer needs to receive one unit of energy within the time interval $[0,T]$ but some consumers need energy at a constant rate while others can tolerate variable rates. Let $\mathcal{B}_1$ be the set of energy units that the energy provider can supply at a constant rate over the interval $[0,T]$ and $\mathcal{B}_2$ be the set of all  energy units that can be supplied over the interval $[0,T]$. We thus have consumers whose flexibility sets  are either $\mathcal{B}_1$ or $\mathcal{B}_2$ with $\mathcal{B}_1 \subset \mathcal{B}_2$.


Another example of consumer flexibility comes from auction-based  spectrum allocation in cognitive radio networks (\cite{khaledi2013auction}, \cite{sengupta2008designing}, \cite{zhang2012auction}) where a primary spectrum owner has multiple frequency bands with different bandwidths. These bands  can be allocated to secondary users who need a certain minimum amount of bandwidth.  Suppose the primary owner has frequency bands of widths $w_1, w_2, \cdots, w_{k}$ with  $w_1 < w_2 < \cdots < w_k$. Let $\mathcal{W}_i, i = 1, 2, \cdots, k,$ denote the set of frequency bands of width $w_i$ that are available for allocation to secondary users.  Define $\mathcal{B}_i = \bigcup\limits_{j=k-i+1}^k \mathcal{W}_j, i = 1, 2, \cdots, k,$ as the set of frequency bands of width greater than or equal to $w_{k-i+1}$.  We thus have $\mathcal{B}_{1} \subset \mathcal{B}_{2} \subset \cdots \subset \mathcal{B}_{k}$. A secondary user that needs one frequency band of width at least $w_i$ can be interpreted as having  $\mathcal{B}_{k-i+1}$ as its flexibility set.

Consider next auction-based content delivery in Wireless Information Centric Networks \cite{mangili2016bandwidth} where multiple content providers compete for limited cache storage resources provided by a Wireless Access Point (WAP) in a given region for a certain time period. Suppose the WAP has $k$ cache servers with storage capacities $c_1 < c_2 <\ldots < c_k$. Assume that one cache server can serve at most one content provider at a time. Let $\mathcal{B}_i$ be the set of cache servers with capacity greater than or equal to $c_{k-i+1}$. Clearly the sets $\mathcal{B}_i, i=1,\ldots,k,$ are nested. A content provider who needs a  cache of storage capacity   at least $c_{k-i+1}$ has the flexibility set $\mathcal{B}_i$.


\vspace{-1em}
\subsection{Comparison with Prior Literature}\label{sec:prior-lit}
The problem of designing auctions has been investigated under many different setups in the prior literature and can be broadly categorized on the basis of (a) the problem objective (revenue or social welfare maximization), (b) the nature of supply (single unit or multiple units, identical or non-identical goods), (c) the nature of demand (unit demand, demand for bundles, etc), and (d) the nature of private information (one-dimensional or multi-dimensional).

Numerous works have addressed social welfare maximizing or \textit{efficient} auctions, the most well-known of these being the Vickrey-Clarke-Groves (VCG) mechanism \cite{vickrey1961counterspeculation}, \cite{clarke1971multipart}, \cite{groves1973incentives}. Efficient auctions have also been extensively studied in the context of combinatorial auctions (\cite{cramton2006combinatorial}, \cite[Chapter 8]{young2014handbook}, \cite[Chapter 11]{nisan2007algorithmic}). Under some scenarios the problem of exactly maximizing social welfare may not be tractable. Some works such as \cite{jin2015quality} and \cite{babaioff2009single} have thus focused on \textit{approximate} social welfare maximizing auctions. Our focus in this paper, however, is on   revenue-maximizing auctions. In the context of revenue-maximizing auctions, we can categorize the relevant literature as follows: \\

\emph{1. Multi-unit auctions with identical goods:} 
This strand of literature has focused on  revenue-maximizing auctions in settings where the seller has a number of identical goods and wants to allocate them among several consumers who may demand one or multiple units. In his seminal paper \cite{myerson1981optimal}, Myerson derived fundamental results for the single-unit revenue maximizing auction. In sequel, several works studied revenue-maximizing multi-unit auctions with identical goods under various assumptions about the consumers' utility functional forms and private information structure.  The setups in  \cite{harris1981theory} and  \cite{maskin1989optimal}, for instance, include the problem of auctioning multiple identical goods among consumers with unit demand and private valuations. \cite{malakhov2009optimal} considered the auction of multiple identical goods to  consumers with limited capacities for the number of goods they can consume.

A key feature of these models is that all goods are perceived to be identical by all consumers. Thus,  consumers care only about the number of goods they receive and not about the identities of the goods received. In contrast, consumers in our model differentiate between goods according to their flexibility sets. For example, a consumer with flexibility set $\mathcal{B}_1$ differentiates between goods in $\mathcal{B}_1$ (which give it a positive utility) and goods not in $\mathcal{B}_1$ (which give no utility) whereas a  consumer with a different flexibility set would view goods differently. In other words,   the distinction between  goods is  made subjectively by each consumer based on  its flexibility set.  

~\\
\emph{2. Combinatorial Auctions:} The problem of designing revenue-maximizing auctions has also been investigated in the context of combinatorial auctions \cite[Section 5.2]{de2003combinatorial}.
When the seller has multiple heterogeneous items to auction, consumers may have different utilities for different {subsets} of items due to complementarities and substitution effects. Combinatorial auctions   provide a framework where  consumers can place bids on various \textit{combinations/bundles} of  goods. Some key setups explored under this umbrella are:  \\

\emph{2.1 Auctions with two non-identical goods:} Armstrong \cite{armstrong2000optimal} studied  revenue-maximizing auction for the case where the seller wants to sell two non-identical goods to several consumers. Each consumer can receive one or both of the goods and has a pair of valuations, one for each of the two goods. The valuations are drawn from binary sets and are  independent across the consumers. Avery et al. \cite{avery2000bundling} considered a similar setup as in Armstrong \cite{armstrong2000optimal} with a   single identifiable consumer who may wish to buy both objects and a number of other consumers who wish to buy only one or the other of the two objects.  Two key features that differentiate these setups from our model are: (a)  in both these setups, one or more consumers can consume more than one good whereas in our model each consumer can consume at most one good, (b) the number of goods in our model is not restricted to be two. \\

\emph{2.2 Auctions with single-minded consumers:} Some recent works have considered an extreme case of complementarity among  goods in multi-unit auctions by imposing the assumption of having single-minded consumers.  A single-minded consumer is interested in getting \textit{all} goods from a certain subset of goods. This is in clear contrast to our setup where each consumer wants to get \textit{one} good from its flexibility set. 

Ledyard \cite{ledyard2007optimal} characterized a revenue-maximizing dominant strategy  auction for single-minded consumers where each consumer's desired bundle is known to the seller and a  consumer's valuation constitutes its  one-dimensional private information. Unlike the model in \cite{ledyard2007optimal},  both valuation and flexibility set are a consumer's private information in our model.
Abhishek and Hajek \cite{abhishek2010revenue} considered optimal auction design for single-minded consumers with each user's preferred bundle as well as its valuation being its private information. The single-minded nature of the consumers differentiates this work from our  flexible consumer model. \\

\emph{2.3 Auctions with one-dimensional private information:} \cite{hartline2007profit} surveys   revenue-optimal auctions in various settings where each consumer's private information comprises only its valuation  and is  assumed to be one-dimensional.  \cite{levin1997optimal} studied an optimal auction design problem   for two non-identical goods where each consumer's valuation function is parametrized by a single quantity that represents its one-dimensional private information. Consumers' private information in our model, however,  is two-dimensional, consisting of both valuation and flexibility set. \\


\emph{2.4 A general combinatorial auction:} \cite[Section 5.2]{de2003combinatorial} considered a general setup where the seller has multiple distinct goods and each consumer has a value function that describes its valuation for each bundle of goods. For each allocation rule,  \cite{de2003combinatorial} provides an optimization problem (in fact, a linear program) whose solution (if it exists) gives a payment   rule  that satisfies incentive compatibility and individual rationality constraints. Eventually, after  linear programs corresponding to all possible choices of the allocation rule are solved, the   allocation and payment rules that yield the highest revenue are declared as the revenue-maximizing mechanism. As pointed out in  \cite[Section 5.2]{de2003combinatorial}, this approach is computationally very demanding because the number of possible allocation rules can be very large and no closed-form solutions are available  in general. 

Our model can be seen as a special case of the general framework of  \cite{de2003combinatorial}.  A consumer with flexibility set $\mathcal{B}_i$  and valuation $\alpha$ can be viewed as having a value function of the form:
\vspace{-.4em}
\begin{align}\label{eq:valueF}
v_i(\mathcal{S}) = \left\{
    \begin{array}{ll}
        \alpha & \mbox{if } \; \mathcal{S} \cap \mathcal{B}_i \neq \emptyset \mbox{~and~} |\mathcal{S}|=1 \\
        0 & \mbox{otherwise } \; 
    \end{array}
\right..
\end{align}
In section \ref{sec:NestedPhis}, we show that under the assumption of nested flexibility sets,  we can find the optimal auction in a much more straightforward and computationally simpler way than the one described in \cite{de2003combinatorial}. In particular, unlike the case in \cite[Section 5.2]{de2003combinatorial}, we do not need to solve a separate  optimization problem for every possible allocation rule which results in a significant reduction in the computational cost. It should also be noted that a consumer's private information (its value function) in \cite{de2003combinatorial} is drawn from a finite set whereas the valuation in our model is a continuous variable.

\begin{remark}
Consumers in our model want one good from their flexibility sets. This model can be viewed as a special case of the models in \cite{demange1986multi} and  \cite{ashlagi2009ascending} where a number of (potentially non-identical) goods are to be allocated among several consumers and each consumer is interested in receiving at most one good. Unlike our objective of revenue-maximization, the objective in \cite{demange1986multi} and  \cite{ashlagi2009ascending} is to find minimal competitive  prices and equilibrium assignments to clear the market.
\end{remark}

\begin{remark}
The model and results in \cite{7548363} are fundamentally different from those in our paper. In particular,  \cite{7548363} deals with a continuum of consumers. This is crucial because it implies that a single consumer cannot influence the ``aggregate demand bundle'' (as defined in \cite{7548363}) and hence the prices. This is in stark contrast to our paper (and most auction design problems) with finitely many consumers where each consumer can influence the prices through its reports/bids. This means that each individual consumer can strategically manipulate its report to influence allocation and prices in our problem whereas it has no effect on prices in \cite{7548363}. This, we believe, makes our paper conceptually very different from \cite{7548363}.  
\end{remark}

\vspace{-1em}
\subsection{Organization} \label{sec:ORG}
The rest of the paper is organized as follows: we discuss the problem formulation and the mechanism setup in Section \ref{sec:Formul}. In Section \ref{sec:BIC-IR-Mechs}, we characterize incentive compatibility and individual rationality constraints for the mechanism.  We show that the optimal allocation is the solution to an integer program in Section \ref{sec:RMM}. In Section \ref{sec:NestedPhis}, we   simplify the optimal allocation and payments and characterize them in terms of simple thresholds. We summarize our findings and briefly point out potential extensions to the current framework in Section \ref{sec:recap} . 
\vspace{-1em}
\subsection{Notations}
$\{ 0,1\}^{N \times M}$ denotes the space of $N\times M$ dimensional matrices with entries that are either 0 or 1. $\mathbb{Z}^+$ is the set of non-negative integers. For a set $\mathcal{A}$, $|\mathcal{A}|$ denotes the cardinality of $\mathcal{A}$.  $x^+$ is the positive part of the real number $x$, that is, $x^+ = \max(x,0)$. 
Vector inequalities are component-wise; that is, for two $1\times n$ dimensional vectors $\textbf{u}=(u_1, \cdots, u_n)$ and $\textbf{v}=(v_1, \cdots, v_n)$, $\textbf{u} \le \textbf{v}$ implies that $u_i \le v_i \; , \text{for} \; i=1, \cdots, n$. The transpose of a vector $\textbf{u}$ is denoted by $\textbf{u}^T$.   $\mathds{1}_{\{a\le b\}}$ denotes 1 if the inequality in the subscript is true and 0 otherwise. $\mathbb{E}$ denotes the expectation operator. For a random variable/random vector $\theta$, $\mathbb{E}_{\theta}$ denotes that the expectation is with respect to the probability distribution of $\theta$. 
\vspace{-.8em}
\section{Problem Formulation}  \label{sec:Formul}
We consider a setup where an auctioneer has $M$ goods and $N$ potential customers. $\mathcal{M} = \{1, 2, \cdots, M\}$ denotes the set of goods and  $\mathcal{N} = \{1, 2, \cdots, N\}$ denotes the set of potential customers. Customer $i$, $i \in \mathcal{N}$, has a flexibility set $\phi_i \subset \mathcal{M}$ which represents the set of goods the customer is equally interested in.  Customer $i$ can consume at most one good from its flexibility set $\phi_i$.  We assume  that the flexibility set of each customer can be one of $k$ nested sets. That is, we have $k$ nested subsets of the set of goods: 
\begin{equation} \label{FlexSets_nest}
\mathcal{B}_1 \subset \mathcal{B}_2 \subset \cdots \subset \mathcal{B}_{k} \subseteq \mathcal{M},
\end{equation}
and  $\phi_i \in \{\mathcal{B}_1 , \mathcal{B}_2, \cdots, \mathcal{B}_k\}$ for every $i \in \mathcal{N}$.  If $\phi_i=\mathcal{B}_j$, we  say that customer $i$'s \emph{flexibility level}, denoted by $b_i$, is $j$. Customer $i$'s utility from receiving a good from $\phi_i$ is $\theta_i$.

 We assume that $\theta_i$ and $b_i$ are customer $i$'s private information and are unknown to other users as well as the auctioneer. We assume that  $(\theta_i,b_i), i \in \mathcal{N},$ are independent random pairs taking values in the product sets $ [\theta_i^{min}, \theta_i^{max}] \times \{1,2,\ldots,k\}$\footnote{$\theta_i^{min}$ is assumed to be non-negative.}, $i \in \mathcal{N}$, respectively. The  probability distributions\footnote{We assume that $f_i(\theta_i,b_i) > 0,$ for all $(\theta_i,b_i) \in [\theta_i^{min}, \theta_i^{max}] \times \{1,2,\ldots,k\}, \forall i \in \mathcal{N}$.} $f_i$ of $(\theta_i,b_i)$, $i \in \mathcal{N},$ are assumed to be common knowledge. We define $\theta \coloneqq (\theta_1, \theta_2, \cdots, \theta_N)$ and $b \coloneqq (b_1, b_2, \cdots, b_N)$ as the customers' valuations profile and flexibility levels profile, respectively. $f(\theta,b)$ is the joint probability distribution of $(\theta,b)$. Let $\Theta_i:= [\theta_i^{min}, \theta_i^{max}]$ and  $\Theta := \prod\limits_{i=1}^N \Theta_i$. The pair 
 $(\theta_i,b_i)$ is referred to as customer $i$'s type.

An allocation of the goods among the customers can be described by  an $N \times M$ dimensional matrix $\mathbf{A}$ with the entry $\bold{A}(i,j) = 1$ if customer $i$ gets good $j$ and $\bold{A}(i,j) = 0$ otherwise.
The matrix $\mathbf{A}$ is called an allocation matrix.  We assume that the goods are indexed such that the first $|\mathcal{B}_l|$ goods belong to $\mathcal{B}_l$, for $l=1,\ldots,k$.

We require that each of the $M$ available goods  be allocated to {at most} one customer and that each customer receives at most one good. This implies that  $\sum\limits_{i=1}^N \textbf{A}(i,j) \le 1\; , \forall j$ and $\sum\limits_{j=1}^M \textbf{A}(i, j) \le 1\; ,\forall i$. 
A binary matrix $\textbf{A} $ that satisfies these two constraints is called a \textit{feasible} allocation matrix. Let $\mathcal{S} \subset \{0,1\}^{N\times M}$ denote the set of all feasible allocation matrices. That is, \vspace{-0.4cm}
\begin{equation} \label{S}
\begin{split}
\mathcal{S} \coloneqq \Big\{ &\textbf{A} \in \{0,1\}^{N \times M} \: \mid \: \sum\limits_{i=1}^N \textbf{A}(i,j) \le 1 \: , \\ &\forall j \in \mathcal{M} \; , \;
\sum\limits_{j=1}^M \textbf{A}(i,j) \le 1 \; , \; \forall i \in \mathcal{N} \: \Big\}.
\end{split}
\end{equation}
Given an allocation matrix $\mathbf{A}$ and a payment $t_i$ charged to customer $i$, the net utility for this customer is
\begin{equation} \label{ui}
u_i(\theta_i, b_i, \textbf{A}, t_i) = \theta_i \Big( \sum\limits_{j \in \mathcal{B}_{b_i}} \textbf{A}(i, j) \Big) - t_i.
\end{equation}
\vspace{-3em}
\subsection{The Mechanism}  \label{sec:MechSetup}
We consider  direct  mechanisms  where, for each $i \in \mathcal{N}$, customer $i$  reports a valuation from the set $\Theta_i$ and a flexibility level from the set $\{1, 2, \cdots, k\}$ to the auctioneer. Customers can misreport their valuations as well as their flexibility levels. A mechanism consists of an allocation rule $q$ and a payment rule $t$. The allocation rule $q$ is a mapping from the type profile space $\Theta \times \{1, 2, \cdots, k\}^N$ to the set of feasible allocation matrices $\mathcal{S}$. The payment rule $t$ is a mapping from $\Theta \times \{1, 2, \cdots, k\}^N$ to $\mathbb{R}^N$ with the $i$th component $t_i$ being the payment charged to customer $i$.

Consider a mechanism $(q,t)$ and suppose customers report valuations $r := (r_1,\ldots,r_N)$ and flexibility levels $c := (c_1, \ldots, c_N)$\footnote{Customers may not report their valuations and/or flexibility levels truthfully, so $r_i$ and $c_i$ may be different from $\theta_i$ and $b_i$, respectively.}. The mechanism then results in an allocation matrix $q(r,c)$ and payments $t(r,c)$. Let $a(b_i)$ be a $1\times M$ dimensional vector whose first $|\mathcal{B}_{b_i}|$ entries  are 1 and the rest are 0. In other words, the $j$ entry of $a(b_i)$ is given as
\begin{equation}\label{ai}
a_j(b_i) = \left\{
    \begin{array}{ll}
        1 & \; \; \text{if} \; \; 1 \le j \le |\mathcal{B}_{b_i}| \\
        0 & \; \; \text{otherwise}  
    \end{array}
\right..
\end{equation}

Customer $i$'s utility function can then be written in terms of its true valuation $\theta_i$, true flexibility level $b_i$, the reported valuations $r$  and the reported flexibility levels $c$ as
\begin{equation} \label{Util}
u_i(\theta_i, r, b_i, c) = \theta_i \; a(b_i) \; q_i^T(r,c) - t_i(r,c),
\end{equation}
where $q_i(r,c)$ is the $i^{\text{th}}$ row of the allocation matrix $q(r,c)$.
\vspace{-.8em}
\subsection{Incentive Compatibility and Individual Rationality} \label{sec:incentive}
The auctioneer's objective is to find a mechanism that maximizes its expected revenue while satisfying \textit{Bayesian Incentive Compatibility} and \textit{Individual Rationality} constraints. We describe these constraints below.

In a Bayesian incentive compatible (BIC) mechanism, truthful reporting of private information (valuations and flexibility levels in our setup) constitutes an equilibrium of the Bayesian game induced by the mechanism. In other words, each customer would prefer to report its true valuation and flexibility level provided that all other customers have adopted truth-telling strategy.  Bayesian incentive compatibility can be described by the following constraint:
\begin{equation} 
\begin{split}
&\mathbb{E}_{\theta_{-i}, b_{-i}}\Big[ \theta_i \; a(b_i) \; q_i^T(\theta,b) - t_i(\theta,b) \Big] \: \ge \\
&\mathbb{E}_{\theta_{-i}, b_{-i}}\Big[ \theta_i \; a(b_i) \; q_i^T(r_i, \theta_{-i}, c_i, b_{-i}) - t_i(r_i, \theta_{-i}, c_i, b_{-i}) \Big], \\
&\forall \theta_i , r_i \in \Theta_i  \; , \; c_i , b_i \in \{1, 2, \cdots, k\} \; , \; \forall i \in \mathcal{N}. 
\end{split}
\raisetag{3.4\baselineskip}
\label{BIC}
\end{equation}
\eqref{BIC} states that  the expected utility of customer $i$ with type $(\theta_i,b_i)$  if it reports its type truthfully is greater than or equal to  its utility if it reports some other type $(r_i,c_i)$.

Individual Rationality (IR) constraint implies that each customer's  expected utility at the truthful reporting equilibrium is non-negative. This can be expressed as: \vspace{-5pt}
\begin{equation} \label{IR}
\begin{split}
&\mathbb{E}_{\theta_{-i}, b_{-i}}\Big[ \theta_i \; a(b_i) \; q_i^T(\theta,b) - t_i(\theta,b) \Big] \; \ge \; 0  \; \; , \\
&\forall \theta_i \in \Theta_i \; , b_i \in \{1, 2, \cdots, k\} \; , \; \forall i \in \mathcal{N}.
\end{split}
\end{equation}
The expected revenue under a BIC and IR mechanism is $\mathbb{E}_{\theta, b}\Big\{\sum\limits_{i=1}^N t_i(\theta, b)\Big\}$ when all customers adopt the truthful strategy.
The auction design problem  can now be formulated as
\begin{equation*}
\begin{split}
\max\limits_{(q,t)} \; \; \; \mathbb{E}_{\theta, b}\Big\{\sum\limits_{i=1}^N t_i(\theta, b)\Big\} \;  , \; \; 
\text{subject to} \; \; \text{\eqref{BIC}, \eqref{IR}}.
\end{split}
\end{equation*}
\vspace{-2em}
\subsection{Key Assumptions} \label{se:assump}
We make two assumptions for the auction design problem. 
Firstly, we assume that the allocation rule $q$ does not give a customer any  good that is outside its \textit{reported} flexibility set.  This can be formalized as follows: 
\begin{assum}\label{assum:Nonz}
We assume that for each $i \in \mathcal{N}$, $q_i(r, c)$ can have non-zero entries \textit{only} in its first $|\mathcal{B}_{c_i}|$ positions. (Recall that the first $|\mathcal{B}_{c_i}|$ positions of $q_i(\cdot,\cdot)$ correspond  to goods in $\mathcal{B}_{c_i}$.)
\end{assum}
The above assumption simply means that the mechanism respects the customers' reported flexibility constraints. We further assume that customers cannot over-report their flexibility level:
\begin{assum}\label{assum:feasb}
For each $i \in \mathcal{N}$, customer $i$'s reported flexibility level $c_i$ cannot exceed its true flexibility level $b_i$.
\end{assum}
The above assumption can be justified by noting that customers gain no utility from getting a good outside their true flexibility set and may in fact suffer a significant disutility if allocated a good outside their true flexibility set. For instance, consider the example of rate-constrained energy delivery in electricity markets that is discussed in Section I-A. While some consumers may be able to tolerate variable rates of energy delivery and are thus considered to be more flexible, other (less flexible) consumers may need to receive energy at a constant rate as their devices could be damaged otherwise. It is thus reasonable  to assume that in this case the consumers will not report higher flexibility level as it could cause \textit{significant disutility} to them.  More generally, customers may reasonably restrict themselves to under-reporting or truthfully reporting their flexibility level if goods outside their flexibility set may be damaging or cause large disutility to them.
Assumption 2  implies that the BIC constraint in \eqref{BIC} need not consider the case of  $c_i > b_i$.

%
\vspace{-.9em}
\subsection{Examples}
\begin{enumerate}[1.]
\item Consider the case where $\mathcal{N}=\{1,2\}$, $\mathcal{M}=\{1,2\}$, that is, there are two customers and two goods. Let $\mathcal{B}_1=\{1\}$ and $\mathcal{B}_2=\{1,2\}$. Customer $1$'s type is  $(\theta_1=1,b_1=2)$ with probability $1$.   Customer $2$'s type $(\theta_2,b_2)$ is uniformly distributed over the set $[0.5,2] \times \{1,2\}$. Consider a mechanism for this case that operates as follows: \begin{enumerate}[(i)]
\item Each customer reports a valuation and  a flexibility level.
\item If customer $i$ has the highest valuation (assume that ties are resolved randomly), the mechanism allocates a good to customer $i$ from its reported flexibility set and charges it the second highest reported valuation.
\item The other customer is allocated a good from its flexibility set if such a good is available and it is charged a reserve price of $0.5$.
\end{enumerate}
Suppose that customer $1$ reports its type truthfully and that customer $2$'s true type is $(\theta_2=2,b_2=2)$. If customer $2$ also reports its type truthfully, it will obtain a good at a price of $1$  (the second highest reported valuation) resulting in a net utility of $2-1=1$. On the other hand, if it misreports its type as $(0.5,2)$, it will obtain a good at a price of  $0.5$ resulting in a net utility of $1.5$.

\item  Consider the same setup as above but with the following mechanism:
\begin{enumerate}[(i)]
\item Each customer reports a valuation and a flexibility level.
\item If customer $i$ has the highest valuation (assume that ties are resolved randomly), the mechanism allocates a good to customer $i$ from its reported flexibility set. Customer $i$ is charged the reported valuation of the other customer \emph{if the two reported the same flexibility level}, otherwise it pays a reserve price of $0.5$. 
\item The other customer is allocated a good from its flexibility set if such a good is available and it is charged a reserve price of $0.5$.
\end{enumerate}
Suppose that customer $1$ reports its type truthfully and that customer $2$'s true type is $(\theta_2=2,b_2=2)$. If customer $2$ also reports its type truthfully, it will obtain a good at a price of $1$ resulting in a net utility of $2-1=1$. On the other hand, if it misreports its type as $(2,1)$, it will obtain a good at a price of  $0.5$ resulting in a net utility of $1.5$.
\end{enumerate}
Thus, in both the examples above, the mechanism described  is not incentive compatible.
\vspace{-1em}
\section{Characterization of BIC and IR Mechanisms}\label{sec:BIC-IR-Mechs}
Suppose all customers other than $i$ report their valuations and flexibility levels truthfully. 
We can then define customer $i$'s expected allocation and payment under the mechanism $(q,t)$ when it reports $r_i \in \Theta_i\; , \; c_i \in \{1, 2, \cdots, k\}$ as:
\begin{equation} \label{Qi}
Q_i(r_i, c_i) \coloneqq \mathbb{E}_{\theta_{-i}, b_{-i}}\Big[ q_i(r_i, \theta_{-i}, c_i, b_{-i}) \Big], 
\end{equation}
\vspace{-7pt}
\begin{equation} \label{Ti}
T_i(r_i, c_i) \coloneqq \mathbb{E}_{\theta_{-i}, b_{-i}}\Big[ t_i(r_i, \theta_{-i}, c_i, b_{-i}) \Big].
\end{equation}

We can now rewrite equations \eqref{BIC} and \eqref{IR} in terms of the interim quantities defined in \eqref{Qi}-\eqref{Ti}. 
The BIC constraint for misreporting valuations and flexibility levels becomes:
\begin{equation} \label{BIC_i_QT}
\begin{split}
&\theta_i \; a(b_i) \; Q^T_i(\theta_i, b_i) - T_i(\theta_i, b_i) \ge  \\
&\theta_i \; a(b_i) \; Q^T_i(r_i, c_i) - T_i(r_i, c_i) \;  , \\
&\forall \theta_i , r_i \in \Theta_i \; , \; c_i \le b_i \; , \; c_i , b_i \in \{1, 2, \cdots, k\} \; , \; \forall i \in \mathcal{N}. 
\end{split}
\raisetag{2\baselineskip}
\end{equation}
The IR constraint is rewritten as:
\begin{equation}\label{IR_i_QT}
\begin{split}
&\theta_i \; a(b_i) \; Q^T_i(\theta_i, b_i) - T_i(\theta_i, b_i) \ge 0 \; , \\
&\forall \theta_i \in \Theta_i \; , b_i \in \{1, 2, \cdots, k\} \; , \; \forall i \in \mathcal{N}.
\end{split}
\end{equation}
\vspace{-2em}
\subsection{One-Dimensional Misreports} \label{sec:1D}
The  BIC constraint in \eqref{BIC_i_QT} captures all possible ways in which a customer may misreport its private information. It includes the following two special sub-classes of constraints:
\begin{enumerate}
\item BIC constraint for misreporting  only valuation:
\begin{equation} \label{BICTheta_i_QT}
\begin{split}
&\theta_i \; a(b_i) \; Q^T_i(\theta_i, b_i) - T_i(\theta_i, b_i) \ge  \\
&\theta_i \; a(b_i) \; Q^T_i(r_i, b_i) - T_i(r_i, b_i) \;  , \\
&\forall \theta_i , r_i \in \Theta_i \; , \; \forall b_i \in \{1, 2, \cdots, k\} \; , \; \forall i \in \mathcal{N}. 
\end{split}
\end{equation}
\item BIC constraint for misreporting only  flexibility level:
\begin{align} \label{BICb_i_QT}
&\theta_i \; a(b_i) \; Q^T_i(\theta_i, b_i) - T_i(\theta_i, b_i) \ge  \notag \\
&\theta_i \; a(b_i) \; Q^T_i(\theta_i, c_i) - T_i(\theta_i, c_i) \;  , \\
&\forall \theta_i \in \Theta_i \; , \; c_i \le b_i \; , \; c_i , b_i \in \{1, 2, \cdots, k\} \; , \; \forall i \in \mathcal{N}. \notag
\end{align}
\end{enumerate}

The following result relates the above constraints for ``one-dimensional'' misreports to the general BIC constraint in \eqref{BIC_i_QT}.
\begin{lemma}\label{lem:BIC2D} The BIC constraint for misreporting both valuation and flexibility level  implies and is implied by the BIC constraints for misreporting only valuation and misreporting only flexibility level. That is, \eqref{BIC_i_QT} holds if and only if \eqref{BICTheta_i_QT} and \eqref{BICb_i_QT} hold.
\begin{proof} See Appendix \ref{sec:BICLemPrf}. \end{proof}
\end{lemma}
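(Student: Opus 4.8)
The plan is to prove the two directions separately. The forward direction is immediate: since \eqref{BIC_i_QT} quantifies over all $r_i \in \Theta_i$ and all $c_i \le b_i$, restricting attention to the sub-case $c_i = b_i$ gives exactly \eqref{BICTheta_i_QT}, and restricting to the sub-case $r_i = \theta_i$ gives exactly \eqref{BICb_i_QT}. So the only real content is the reverse direction: assuming both one-dimensional constraints hold, deduce the joint constraint.

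For the reverse direction, fix a customer $i$, a true type $(\theta_i, b_i)$, and a candidate misreport $(r_i, c_i)$ with $c_i \le b_i$. The idea is to ``chain'' the two one-dimensional deviations through an intermediate report. First I would apply \eqref{BICb_i_QT} with true type $(\theta_i, b_i)$ and deviation $c_i$ to get that reporting $(\theta_i, b_i)$ is at least as good as reporting $(\theta_i, c_i)$ for a customer with valuation $\theta_i$ and flexibility level $b_i$. Then I need to compare $(\theta_i, c_i)$ with the full misreport $(r_i, c_i)$. Here I would use \eqref{BICTheta_i_QT} at flexibility level $c_i$: it says that for a customer whose flexibility level is $c_i$, reporting valuation $\theta_i$ truthfully beats reporting $r_i$, i.e.
\begin{equation*}
\theta_i \, a(c_i) \, Q_i^T(\theta_i, c_i) - T_i(\theta_i, c_i) \ge \theta_i \, a(c_i) \, Q_i^T(r_i, c_i) - T_i(r_i, c_i).
\end{equation*}
The subtlety is that in the joint constraint \eqref{BIC_i_QT} the utility term uses $a(b_i)$, not $a(c_i)$, so the inequality above is not directly in the right form. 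This is where Assumption \ref{assum:Nonz} enters: under that assumption $Q_i(\theta_i, c_i)$ and $Q_i(r_i, c_i)$ have nonzero entries only in their first $|\mathcal{B}_{c_i}|$ coordinates, and since $c_i \le b_i$ we have $|\mathcal{B}_{c_i}| \le |\mathcal{B}_{b_i}|$, so $a(b_i)$ agrees with $a(c_i)$ (both equal $1$) on exactly the coordinates where these $Q$-vectors can be nonzero. Hence $a(b_i) \, Q_i^T(\theta_i,c_i) = a(c_i)\, Q_i^T(\theta_i,c_i)$ and likewise for $r_i$, so I can freely replace $a(c_i)$ by $a(b_i)$ in the displayed inequality.

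Combining the two steps: reporting $(\theta_i,b_i)$ beats reporting $(\theta_i,c_i)$ (by \eqref{BICb_i_QT}, with the utility evaluated using $a(b_i)$), and reporting $(\theta_i,c_i)$ beats reporting $(r_i,c_i)$ (by \eqref{BICTheta_i_QT} rewritten via Assumption \ref{assum:Nonz} with $a(b_i)$), so by transitivity reporting $(\theta_i,b_i)$ beats reporting $(r_i,c_i)$, which is precisely \eqref{BIC_i_QT}. I expect the main obstacle — really the only nonroutine point — to be the coordinate-matching argument using Assumption \ref{assum:Nonz} to align the $a(b_i)$ weighting with the $a(c_i)$ weighting; once that is in place the chaining is a two-line transitivity argument. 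I would also note explicitly that Assumption \ref{assum:feasb} ($c_i \le b_i$) is exactly what makes $|\mathcal{B}_{c_i}| \le |\mathcal{B}_{b_i}|$ hold, which is why the argument only needs to cover under-reports of flexibility.
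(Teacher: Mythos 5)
Your proposal is correct and follows essentially the same route as the paper: the forward direction by specializing the quantifiers, and the converse by combining the flexibility-only inequality at $(\theta_i,b_i)$ with the valuation-only inequality at level $c_i$, using Assumption~\ref{assum:Nonz} (together with $c_i \le b_i$) to identify $a(c_i)\,Q_i^T(\cdot,c_i)$ with $a(b_i)\,Q_i^T(\cdot,c_i)$. The paper adds the two inequalities and cancels rather than chaining by transitivity, but that is the same argument.
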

Lemma \ref{lem:BIC2D} allows us to replace the general  BIC constraint for two-dimensional misreports by the simpler one-dimensional BIC constraints given in \eqref{BICTheta_i_QT} and \eqref{BICb_i_QT}.   The auction design problem now becomes:
\begin{equation*} 
\begin{split}
\max\limits_{(q ,t)} \; \; \mathbb{E}_{\theta, b}\Big\{\sum\limits_{i=1}^N t_i(\theta, b)\Big\} \; , \; 
\text{subject to} \; \text{\eqref{IR_i_QT}}, \eqref{BICTheta_i_QT}, \eqref{BICb_i_QT}.
\end{split}
\end{equation*}
\vspace{-2em}
\subsection{Alternative Characterization of \eqref{IR_i_QT}, \eqref{BICTheta_i_QT}, \eqref{BICb_i_QT}} \label{sec:alter}
We will now derive alternative characterizations of the constraints \eqref{IR_i_QT}, \eqref{BICTheta_i_QT}, \eqref{BICb_i_QT} that will be helpful for finding the optimal mechanism.
 
\begin{lemma}\label{thm:one}
 A mechanism $(q,t)$ satisfies the BIC constraint for misreporting only valuation (as given in \eqref{BICTheta_i_QT}) if and only if for all $i \in \mathcal{N}$, $a(b_i) Q^T_i(r_i, b_i)$ is non-decreasing in $r_i$ for all $b_i$ and  \vspace{-5pt}
\begin{align}
&T_i(r_i, b_i)  \notag \\
&= K_i(b_i) + r_i a(b_i) Q^T_i(r_i, b_i) - a(b_i) \int\limits_{\theta_i^{\text{min}}}^{r_i} Q^T_i(s, b_i) \: ds, \label{Thrm1}
\end{align}
for all $r_i,b_i$.
\end{lemma}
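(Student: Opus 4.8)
The plan is to fix a customer $i\in\mathcal N$ and a flexibility level $b_i$ and treat \eqref{BICTheta_i_QT} as a one-dimensional screening condition in the single real variable $r_i$; since only the valuation is misreported, $b_i$ stays fixed and the only allocation quantity that enters is the scalar $g(r_i):=a(b_i)\,Q_i^T(r_i,b_i)$, the interim probability that $i$ receives a good from $\mathcal B_{b_i}$ when it reports $r_i$. By feasibility of the allocation matrix, $0\le g(r_i)\le 1$, so $g$ is bounded. Writing $h(r_i):=T_i(r_i,b_i)$, constraint \eqref{BICTheta_i_QT} reads $\theta_i g(\theta_i)-h(\theta_i)\ge \theta_i g(r_i)-h(r_i)$ for all $\theta_i,r_i\in\Theta_i$. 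Nothing couples different indices $i$ or different values of $b_i$, so it suffices to prove the two implications for one such pair via a standard envelope/convexity argument.

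\emph{Necessity.} Assuming \eqref{BICTheta_i_QT}, I would introduce the truthful interim utility $U(\theta_i):=\theta_i g(\theta_i)-h(\theta_i)$ and rewrite the constraint as $U(\theta_i)\ge U(r_i)+(\theta_i-r_i)g(r_i)$ for all $\theta_i,r_i$. This exhibits $U$ as the pointwise supremum over $r_i$ of affine functions of $\theta_i$ of slope $g(r_i)$, hence $U$ is convex on the compact interval $\Theta_i$, therefore absolutely continuous and differentiable off a countable set, with $U'(\theta_i)=g(\theta_i)$ wherever $U$ is differentiable (the supporting-line inequality pins the subgradient at $\theta_i$ to the value $g(\theta_i)$). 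Convexity of $U$ simultaneously forces $g$ to be non-decreasing, which is the monotonicity claim in the lemma. The fundamental theorem of calculus for absolutely continuous functions then gives $U(r_i)=U(\theta_i^{\text{min}})+\int_{\theta_i^{\text{min}}}^{r_i} g(s)\,ds$; substituting $U(r_i)=r_i g(r_i)-h(r_i)$ and solving for $h(r_i)=T_i(r_i,b_i)$ yields exactly \eqref{Thrm1} with the constant $K_i(b_i):=-U(\theta_i^{\text{min}})$.

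\emph{Sufficiency.} For the converse I would assume $g$ non-decreasing and $h$ given by \eqref{Thrm1}, and compute directly
\[
\big[\theta_i g(\theta_i)-h(\theta_i)\big]-\big[\theta_i g(r_i)-h(r_i)\big]=\int_{r_i}^{\theta_i}\big(g(s)-g(r_i)\big)\,ds ,
\]
where the constant $K_i(b_i)$ cancels and the $r_i g(r_i)$ and $\theta_i g(\theta_i)$ terms collapse against the integral. Monotonicity of $g$ makes the right-hand side non-negative in both cases $\theta_i\ge r_i$ (integrand $\ge 0$ over an increasing interval) and $\theta_i<r_i$ (integrand $\le 0$ over a reversed interval), so \eqref{BICTheta_i_QT} holds.

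\emph{Main obstacle.} The only genuinely delicate point is in the necessity direction: one must invoke the measure-theoretic regularity of convex functions — absolute continuity on compact intervals together with a.e.\ differentiability — to pass rigorously from the envelope inequality to the integral representation, and the boundedness $0\le g\le 1$ is what guarantees the integral in \eqref{Thrm1} is finite and the manipulations are legitimate. The remaining steps are routine bookkeeping.
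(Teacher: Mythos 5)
Your proof is correct and follows essentially the same route as the paper: the sufficiency direction is the identical integral computation showing the truthful-minus-deviating utility equals $\int_{r_i}^{\theta_i} a(b_i)\{Q_i^T(s,b_i)-Q_i^T(r_i,b_i)\}\,ds$, and the necessity direction is the standard Myerson/envelope argument yielding the same constant $K_i(b_i)=T_i(\theta_i^{min},b_i)-\theta_i^{min}\,a(b_i)Q_i^T(\theta_i^{min},b_i)$. The only cosmetic difference is that you obtain monotonicity and the integral representation together from convexity of the indirect utility (a pointwise supremum of affine functions with slopes in $[0,1]$), whereas the paper proves monotonicity by adding the two incentive constraints and then cites the integral form of the Envelope Theorem from Milgrom for the representation.
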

\begin{proof}
See Appendix \ref{sec:thmBICValPrf}.
\end{proof}

\begin{lemma}\label{lemma:ir}
Suppose the mechanism $(q,t)$ satisfies the BIC constraint for misreporting only valuation (as given in \eqref{BICTheta_i_QT}). Then, it satisfies the IR constraint \eqref{IR_i_QT} if and only if for all $b_i$
\begin{equation} \label{Ki}
\theta_i^{\text{min}} \; a(b_i) \; Q^T_i(\theta_i^{\text{min}}, b_i) - T_i(\theta_i^{\text{min}}, b_i)   \ge 0.
\end{equation}
\end{lemma}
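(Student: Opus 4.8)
The plan is to exploit the payment formula from Lemma~\ref{thm:one} to reduce the IR constraint \eqref{IR_i_QT}, which a priori must hold for every $\theta_i \in \Theta_i$, to a single condition at the lowest valuation $\theta_i^{\text{min}}$.

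One implication is trivial: \eqref{Ki} is just \eqref{IR_i_QT} specialized to $\theta_i = \theta_i^{\text{min}}$, so if the mechanism is IR then \eqref{Ki} holds. For the other direction I would fix $i$ and $b_i$ and write the customer's interim utility under truthful reporting as $U_i(\theta_i,b_i) := \theta_i\,a(b_i)\,Q_i^T(\theta_i,b_i) - T_i(\theta_i,b_i)$. Because $(q,t)$ satisfies the valuation-only BIC constraint \eqref{BICTheta_i_QT}, Lemma~\ref{thm:one} lets me substitute the expression \eqref{Thrm1} for $T_i(\theta_i,b_i)$; the two occurrences of $\theta_i\,a(b_i)\,Q_i^T(\theta_i,b_i)$ cancel and I am left with
\[
U_i(\theta_i,b_i) = -K_i(b_i) + a(b_i)\int_{\theta_i^{\text{min}}}^{\theta_i} Q_i^T(s,b_i)\,ds .
\]
Evaluating at $\theta_i=\theta_i^{\text{min}}$ shows $U_i(\theta_i^{\text{min}},b_i) = -K_i(b_i)$, which is exactly the left-hand side of \eqref{Ki}.

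The next step is to observe that the integrand is nonnegative: $Q_i(s,b_i)$ is the expectation of a row of a feasible allocation matrix in $\mathcal{S}$, hence has entries in $[0,1]$, and $a(b_i)$ has $0/1$ entries, so $a(b_i)\,Q_i^T(s,b_i) \ge 0$ for all $s$. Therefore $\theta_i \mapsto U_i(\theta_i,b_i)$ is non-decreasing on $\Theta_i$, so $U_i(\theta_i,b_i) \ge U_i(\theta_i^{\text{min}},b_i)$ for every $\theta_i$. If \eqref{Ki} holds, this lower bound is $\ge 0$, and since $i$ and $b_i$ were arbitrary the IR constraint \eqref{IR_i_QT} holds in full.

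I do not anticipate a genuine obstacle; the argument is the standard "interim utility is pinned down up to the type with the lowest valuation" reasoning. The only points requiring a little care are the algebraic cancellation that turns \eqref{Thrm1} into the displayed closed form for $U_i$, and the justification that the interim allocation $a(b_i)\,Q_i^T(s,b_i)$ is nonnegative, which is what makes $U_i$ monotone in $\theta_i$ and hence lets us reduce the whole family of IR constraints to \eqref{Ki}.
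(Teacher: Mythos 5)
Your proposal is correct and follows essentially the same route as the paper: both reduce the full IR constraint to the single condition at $\theta_i^{\text{min}}$ by invoking Lemma~\ref{thm:one} to identify $-K_i(b_i)$ with the left-hand side of \eqref{Ki} and then using the nonnegativity of $a(b_i)\,Q_i^T(s,b_i)$ to conclude that the truthful interim utility is non-decreasing in $\theta_i$. The paper's proof merely compresses this into the observation that $K_i(b_i)\le 0$ under \eqref{Ki}; your write-up makes the same steps explicit.
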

\begin{proof}
Clearly \eqref{IR_i_QT} implies \eqref{Ki}. The converse follows from Lemma \ref{thm:one} by noting that \[K_i(b_i) = T_i(\theta_i^{min}, b_i) - \theta_i^{min} a(b_i) Q^T_i(\theta_i^{min}, b_i),\]
and that the right hand side above is non-positive due to \eqref{Ki}.
\end{proof}

Using the  above two lemmas, we  derive a sufficient condition for the mechanism to satisfy the BIC constraint for misreporting only flexibility level.
\begin{lemma}\label{lem:BICSuffb} Suppose the mechanism $(q,t)$ is individually rational and satisfies the BIC constraint for misreporting only valuation (as given in \eqref{BICTheta_i_QT}). Then the mechanism $(q,t)$ satisfies the BIC constraint for misreporting only flexibility level if the following are true:
\begin{enumerate}[(i)]
\item $a(c_i)\;Q^T_i(\theta_i,c_i)$ is non-decreasing in $c_i \; , \; \forall \theta_i \in \Theta_i, \forall i \in \mathcal{N}$, and
\item $T_i(\theta_i^{\text{min}}, c_i) = 0 \; , \; \forall c_i \in \{ 1, 2, \cdots, k \}\; , \; \forall i \in \mathcal{N}$. 
\end{enumerate}
\begin{proof} See Appendix \ref{sec:SuffbPrf}. \end{proof}
\end{lemma}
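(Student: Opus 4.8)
The plan is to verify the BIC constraint for misreporting only flexibility level, i.e., \eqref{BICb_i_QT}, directly under the two stated hypotheses, using the payment formula from Lemma \ref{thm:one} and the normalization forced by individual rationality. First I would fix $i \in \mathcal{N}$, a true type $(\theta_i, b_i)$, and a candidate misreport $c_i \le b_i$. By Lemma \ref{lem:BICSuffb}'s standing assumption the mechanism satisfies \eqref{BICTheta_i_QT}, so Lemma \ref{thm:one} applies and gives the explicit form \eqref{Thrm1} for $T_i(\theta_i, b_i)$ and for $T_i(\theta_i, c_i)$ (with the same level argument in each factor). Moreover, hypothesis (ii) says $T_i(\theta_i^{\min}, c_i) = 0$ for every level, which — combined with the expression for $K_i$ from the proof of Lemma \ref{lemma:ir}, namely $K_i(\cdot) = T_i(\theta_i^{\min},\cdot) - \theta_i^{\min} a(\cdot) Q_i^T(\theta_i^{\min},\cdot)$ — pins down $K_i(b_i)$ and $K_i(c_i)$ in terms of $Q_i$ alone. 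Substituting back, the net utility of type $(\theta_i,b_i)$ from truthful flexibility reporting becomes $a(b_i)\int_{\theta_i^{\min}}^{\theta_i} Q_i^T(s,b_i)\,ds$, and its net utility from reporting level $c_i$ is $\theta_i\, a(b_i) Q_i^T(\theta_i,c_i) - T_i(\theta_i,c_i)$, where $T_i(\theta_i,c_i) = \theta_i\, a(c_i)Q_i^T(\theta_i,c_i) - a(c_i)\int_{\theta_i^{\min}}^{\theta_i}Q_i^T(s,c_i)\,ds$ (using $K_i(c_i) = -\theta_i^{\min} a(c_i) Q_i^T(\theta_i^{\min}, c_i)$ together with the two integral pieces, and noting hypothesis (ii) kills the constant).

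The second step is the key algebraic manipulation: I would write the difference "truthful minus misreport" and show it is nonnegative. After the substitutions, this difference equals
\[
a(b_i)\!\int_{\theta_i^{\min}}^{\theta_i}\! Q_i^T(s,b_i)\,ds \;-\; \theta_i\, a(b_i) Q_i^T(\theta_i,c_i) \;+\; \theta_i\, a(c_i)Q_i^T(\theta_i,c_i) \;-\; a(c_i)\!\int_{\theta_i^{\min}}^{\theta_i}\! Q_i^T(s,c_i)\,ds.
\]
Here I would exploit Assumption \ref{assum:Nonz}: since $Q_i(\theta_i, c_i)$ has nonzero entries only in its first $|\mathcal{B}_{c_i}|$ positions and $c_i \le b_i$, the vector $a(b_i)$ acts on $Q_i^T(\cdot, c_i)$ exactly like $a(c_i)$ does — both simply sum the (only possibly nonzero) first $|\mathcal{B}_{c_i}|$ entries. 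Hence $a(b_i)Q_i^T(\cdot,c_i) = a(c_i)Q_i^T(\cdot,c_i)$, so the two middle terms cancel and the displayed difference collapses to $a(b_i)\int_{\theta_i^{\min}}^{\theta_i}\big(Q_i^T(s,b_i) - Q_i^T(s,c_i)\big)ds$ — again using the same observation to replace $a(c_i)$ by $a(b_i)$ in the last integral.

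The final step is to argue nonnegativity of $\int_{\theta_i^{\min}}^{\theta_i}\big(a(b_i)Q_i^T(s,b_i) - a(b_i)Q_i^T(s,c_i)\big)ds$. For this I would invoke hypothesis (i): $a(c_i)Q_i^T(\theta_i,c_i)$ is nondecreasing in $c_i$, which — once more using the $a(b_i)\equiv a(c_i)$ identity on vectors supported on the first $|\mathcal{B}_{c_i}|$ coordinates — means $a(b_i)Q_i^T(s,c_i) \le a(b_i)Q_i^T(s,b_i)$ pointwise in $s$ whenever $c_i \le b_i$. Therefore the integrand is nonnegative on the whole interval $[\theta_i^{\min},\theta_i]$, the integral is nonnegative, and \eqref{BICb_i_QT} holds; running $i$, $(\theta_i,b_i)$, and $c_i$ over their ranges completes the proof. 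I expect the main obstacle to be bookkeeping the interplay between the two different "$a$" vectors $a(b_i)$ and $a(c_i)$ acting on the truncated expected-allocation vectors — that is where Assumption \ref{assum:Nonz} is essential, and getting that identification precisely right (rather than, say, incorrectly assuming $Q_i(\cdot,c_i)$ and $Q_i(\cdot,b_i)$ agree on the first $|\mathcal{B}_{c_i}|$ coordinates, which is false) is the delicate part.
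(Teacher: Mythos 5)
Your overall route is exactly the paper's: form the ``truthful minus misreport'' utility difference for a deviation $c_i \le b_i$, substitute the payment formula \eqref{Thrm1} with the constant pinned down by hypothesis (ii), use Assumption \ref{assum:Nonz} to identify $a(b_i)\,Q_i^T(\cdot,c_i)$ with $a(c_i)\,Q_i^T(\cdot,c_i)$, and conclude nonnegativity from hypothesis (i). However, there is an algebraic slip in the middle: hypothesis (ii) does \emph{not} ``kill the constant.'' It forces $T_i(\theta_i^{\min},\cdot)=0$, so that $K_i(\cdot) = -\,\theta_i^{\min}\, a(\cdot)\, Q_i^T(\theta_i^{\min},\cdot)$, which is generally nonzero for a mechanism satisfying only the lemma's hypotheses (nothing here forces $Q_i^T(\theta_i^{\min},\cdot)=0$; that only happens later for the specific optimal mechanism under Assumption \ref{assum:Negwmin}). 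Carrying the constant through, the truthful utility is $\theta_i^{\min} a(b_i)Q_i^T(\theta_i^{\min},b_i) + a(b_i)\int_{\theta_i^{\min}}^{\theta_i}Q_i^T(s,b_i)\,ds$, and the utility difference equals your integral \emph{plus} the boundary term $\theta_i^{\min}\bigl(a(b_i)Q_i^T(\theta_i^{\min},b_i) - a(c_i)Q_i^T(\theta_i^{\min},c_i)\bigr)$ --- precisely the second term retained in the paper's display \eqref{BICb_L2_s4}. The slip is benign: the omitted term is nonnegative by hypothesis (i) together with the standing assumption $\theta_i^{\min}\ge 0$, so your conclusion stands, but the intermediate equalities you assert are false as written and the boundary term should be reinstated and separately signed.
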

\vspace{-2em}
\section{Revenue Maximizing Mechanism}\label{sec:RMM}
We can now use the results of Section \ref{sec:BIC-IR-Mechs} to simplify the objective of the auction design problem. 
We define  
\begin{equation}
w_i(\theta_i, b_i) := \Big( \theta_i - \frac{1 - F_{i}(\theta_i | b_i)}{f_{i}(\theta_i | b_i)}\Big),
\end{equation}
 where  $f_{i}(\theta_i | b_i)$ is the conditional probability density function of customer $i$'s valuation conditioned on its flexibility level $b_i$ and $F_{i}(\theta_i | b_i)$ is the corresponding cumulative distribution function. $w_i(\theta_i, b_i)$  is referred to as customer $i$'s \textit{virtual type or virtual valuation} in economics terminology \cite{borgers2015introduction}.

\begin{lemma} \label{lemma:revenue}
 Suppose $(q,t)$ is a BIC and IR mechanism for which (i) $K_i(b_i)=0$ for all $i$ and $b_i$\footnote{ $K_i(b_i)$ appears in Lemma \ref{thm:one}.} and (ii) $q$ is a solution to the following functional optimization problem
\begin{equation}\label{prob2}
\max\limits_{q} \;\; \sum\limits_{b} \int_{\theta} \; \sum\limits_{i=1}^N \Big[ a(b_i) \; q_i^T(\theta, b) w_i(\theta_i, b_i)\Big] f(\theta, b) d\theta.
\end{equation}
Then $(q,t)$ is an optimal mechanism.
\begin{proof}
See Appendix \ref{sec:revLemmaPf}.
\end{proof}
\end{lemma}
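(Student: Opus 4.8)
The plan is to adapt the classical Myerson-style argument \cite{myerson1981optimal} to this two-dimensional-type setting. First I would derive a closed-form expression for the expected revenue of an \emph{arbitrary} BIC and IR mechanism in terms of its allocation rule and the constants $K_i(b_i)$, then observe that this expression is maximized exactly by the two conditions imposed in the statement, and finally conclude that any mechanism meeting those conditions must be optimal.

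The revenue rewriting proceeds as follows. Using the independence of the customers' types $(\theta_j,b_j)$, $j\in\mathcal N$, together with the definition \eqref{Ti}, the expected revenue of any mechanism equals $\sum_{i=1}^N \mathbb{E}_{\theta_i,b_i}[T_i(\theta_i,b_i)]$. For a mechanism satisfying the valuation-misreport BIC constraint \eqref{BICTheta_i_QT} --- which every BIC mechanism does --- Lemma \ref{thm:one} supplies the payment identity \eqref{Thrm1}. I would substitute \eqref{Thrm1} into $\mathbb{E}_{\theta_i\mid b_i}[T_i(\theta_i,b_i)]$ and use Fubini's theorem to rewrite $\mathbb{E}_{\theta_i\mid b_i}\!\left[a(b_i)\int_{\theta_i^{\min}}^{\theta_i}Q_i^T(s,b_i)\,ds\right]$ as $\int_{\theta_i^{\min}}^{\theta_i^{\max}} a(b_i)Q_i^T(s,b_i)\bigl(1-F_i(s\mid b_i)\bigr)\,ds$. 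Combining this with $\mathbb{E}_{\theta_i\mid b_i}[\theta_i\,a(b_i)Q_i^T(\theta_i,b_i)]$ and factoring out the conditional density $f_i(\cdot\mid b_i)$ produces precisely the virtual valuation $w_i(\cdot,b_i)$, giving $\mathbb{E}_{\theta_i\mid b_i}[T_i(\theta_i,b_i)] = K_i(b_i) + \mathbb{E}_{\theta_i\mid b_i}[a(b_i)Q_i^T(\theta_i,b_i)\,w_i(\theta_i,b_i)]$. Averaging over $b_i$, summing over $i$, and using \eqref{Qi} with independence to replace the interim allocations by $q$, I obtain
\[
\mathbb{E}_{\theta,b}\left[\sum_{i=1}^N t_i(\theta,b)\right] = \sum_{i=1}^N \mathbb{E}_{b_i}\bigl[K_i(b_i)\bigr] + \sum_{b}\int_{\theta}\sum_{i=1}^N a(b_i)\,q_i^T(\theta,b)\,w_i(\theta_i,b_i)\,f(\theta,b)\,d\theta.
\]

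Optimality then follows in two strokes. For the first sum: by Lemma \ref{lemma:ir} (equivalently, from \eqref{Ki} together with the identity $K_i(b_i)=T_i(\theta_i^{\min},b_i)-\theta_i^{\min}a(b_i)Q_i^T(\theta_i^{\min},b_i)$ from Lemma \ref{thm:one}), individual rationality forces $K_i(b_i)\le 0$, so $\sum_i\mathbb{E}_{b_i}[K_i(b_i)]\le 0$. For the second sum: the allocation rule of any BIC and IR mechanism maps type profiles into feasible allocation matrices and, by Assumption \ref{assum:Nonz}, puts no mass outside reported flexibility sets, hence it is an admissible point of the optimization \eqref{prob2}; therefore the second sum is at most the optimal value of \eqref{prob2}. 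Combining, the expected revenue of \emph{every} BIC and IR mechanism is bounded above by the optimal value of \eqref{prob2}. Finally, for a mechanism $(q,t)$ satisfying (i) and (ii), the first sum vanishes since $K_i(b_i)=0$ and the second sum equals the optimal value of \eqref{prob2} since $q$ solves it, so its revenue attains the upper bound; being BIC and IR by hypothesis, it is therefore an optimal mechanism.

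I expect the main obstacle to be the calculation that converts the payment identity \eqref{Thrm1} into the virtual-valuation form: it is essential to work throughout with the \emph{conditional} quantities $f_i(\cdot\mid b_i)$ and $F_i(\cdot\mid b_i)$ rather than the unconditional density $f_i$, to justify the interchange of the order of integration for the inner integral (routine, since the integrand is bounded on the compact interval $[\theta_i^{\min},\theta_i^{\max}]$), and to carry the vector $a(b_i)Q_i^T(\cdot,b_i)$ through the manipulations componentwise. A secondary point needing care is that the upper bound ``revenue $\le$ optimal value of \eqref{prob2}'' must be established for all BIC and IR mechanisms --- not merely those of the special form in the statement --- so one should check that the feasible set of \eqref{prob2} is exactly the class of admissible allocation rules, ensuring no competing mechanism is left out.
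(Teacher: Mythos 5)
Your proposal is correct and follows essentially the same route as the paper's proof in Appendix~\ref{sec:revLemmaPf}: decompose the expected revenue as $\sum_i\mathbb{E}_{b_i}[K_i(b_i)]$ plus the virtual-surplus term, use Lemmas~\ref{thm:one} and~\ref{lemma:ir} to conclude $K_i(b_i)\le 0$ for any BIC and IR mechanism, and observe that conditions (i) and (ii) attain the resulting upper bound. You additionally spell out the Fubini/integration-by-parts step that the paper compresses into ``after some simplifications,'' which is a useful elaboration but not a different argument.
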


 In order to simplify the maximization problem in \eqref{prob2}  we assume that
  the virtual types $\Big( \theta_i - \frac{1 - F_{i}(\theta_i | b_i)}{f_{i}(\theta_i | b_i)}\Big)$ are non-decreasing in $\theta_i$ and $b_i$. Such a condition holds if $\frac{f_{i}(\theta_i | b_i)}{1 - F_{i}(\theta_i | b_i)}$ is non-decreasing in $\theta_i$ and $b_i$. 
  

\textit{Generalized Monotone Hazard Rate Condition:} The type $(\theta_i, b_i)$ is said to be partially ordered above $(\theta_i', b_i')$, and this relation denoted by  $(\theta_i, b_i) \succeq (\theta_i', b_i')$, if $\theta_i \ge \theta_i'$ and $b_i \ge b_i'$. The distribution $f_{i}(\cdot, \cdot)$  satisfies the generalized monotone hazard rate condition if: 
\begin{equation}\label{MHRC}
\begin{split}
 \hspace*{-0.3cm}(\theta_i, b_i) \succeq (\theta_i', b_i') \; \; \Longrightarrow \; \; \frac{f_{i}(\theta_i | b_i)}{1 - F_{i}(\theta_i | b_i)} \ge \frac{f_{i}(\theta_i' | b_i')}{1 - F_{i}(\theta_i' | b_i')};
\end{split}
\end{equation}
Further,  $b_i > b'_i$ and $\theta_i \geq \theta_i'$ imply
\begin{equation}\label{MHRC_2} \frac{f_{i}(\theta_i | b_i)}{1 - F_{i}(\theta_i | b_i)} > \frac{f_{i}(\theta_i' | b_i')}{1 - F_{i}(\theta_i' | b_i')}.
\end{equation}
   
\begin{assum}\label{assum:MHRC}
We assume that the probability density functions $f_i(\cdot, \cdot)$ satisfy the generalized monotone hazard rate condition for all $i \in \mathcal{N}$.
\end{assum}

\begin{remark}\label{rem:Hz}
The above condition  can be viewed as a generalization of the increasing hazard rate condition \cite[Chapter 2]{borgers2015introduction} and is similar to the condition about monotonicity of virtual valuations  described in \cite{pai2013optimal} for multidimensional private types. To get further insights into the implications of Assumption \ref{assum:MHRC}, consider two possible types of customer $i$ with flexibility levels $b_C \ge b_D$. Let $X_C$ and $X_D$ be  random variables that are distributed according to the corresponding conditional probability density functions $f_{i}(.|b_C)$ and $f_{i}(.|b_D)$ respectively. For a realization $\alpha$ of $X_C, X_D$,  Assumption \ref{assum:MHRC} implies:
\begin{equation}
\frac{f_{i}(\alpha|b_C)}{1 - F_{i}(\alpha|b_C)} \ge \frac{f_{i}(\alpha|b_D)}{1 - F_{i}(\alpha|b_D)} \; .
\end{equation}  
In the language of Shaked and Shanthikumar \cite[Section 1.B.1]{shaked2007stochastic}, this means that $X_C$ is smaller than $X_D$ in the hazard rate order (denoted by $X_C \le_{hr} X_D$) . According to Theorem 1.B.1. in  \cite{shaked2007stochastic}, it can then be concluded that $X_D$  stochastically dominates $X_C$ in the first order (denoted by $X_C \le_{st} X_D $). Thus one can roughly say that less flexible consumers are expected to have higher valuations than more flexible consumers. 
\end{remark}


\begin{assum} \label{assum:Negwmin}
We assume that $w_i(\theta_i^{min}, b_i) < 0 \; , \; \forall b_i \in \{1, 2, \cdots, k\} \; , \; \forall i \in \mathcal{N}$.
\end{assum}

\begin{remark}\label{rem:wMinfamilies}
Two examples of   families of probability distributions $f_{i}(\theta_i | b_i)$ that satisfy Assumption \ref{assum:Negwmin} are:
\begin{itemize}
\item Let $\theta_i^{min} = 0$. Then any probability density function $g(\cdot)$ on $[\theta_i^{min}, \theta_i^{max}]$ such that $g(0) > 0$ satisfies Assumption \ref{assum:Negwmin}.
\item A uniform distribution $g(\theta_i) = \frac{1}{|\theta_i^{max} - \theta_i^{min}|}, \theta_i \in [\theta_i^{min}, \theta_i^{max}]$ with $\theta_i^{min} < |\theta_i^{max} - \theta_i^{min}|$ satisfies Assumption \ref{assum:Negwmin}.
\end{itemize}
\end{remark}
The following theorem characterizes the optimal mechanism under the above assumptions.
\begin{theorem}\label{thm:2}
 Consider the allocation and tax functions $(q^*,t^*)$ defined below \vspace{-0.3cm}
\begin{align} \label{qi_opt}
q^*(\theta, b) \in \argmax\limits_{\mathbf{A} \in \mathcal{S}} \; &\sum\limits_{i=1}^N \left(a(b_i)  \mathbf{A}_i^T\right) w_i(\theta_i, b_i) \; ,
\end{align} 
where $\mathbf{A}_i$ is the $i$th row of matrix $\mathbf{A}$;
\begin{equation} \label{ti_opt}
t^*_i(\theta, b) := \theta_i \: a(b_i) \: q_i^{*T}(\theta, b) - a(b_i) \: \int\limits^{\theta_i}_{\theta_i^{min}} q_i^{*T}(s,\theta_{-i}, b) \: ds.
\end{equation}
Then, under Assumptions 1-4, $(q^*,t^*)$ is a revenue-maximizing  Bayesian incentive compatible and individually rational mechanism.


\end{theorem}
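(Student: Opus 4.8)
The plan is to show that $(q^*,t^*)$ satisfies every hypothesis of Lemma~\ref{lemma:revenue} --- that it is BIC and IR, that $K_i(b_i)=0$ for all $i,b_i$, and that $q^*$ solves the functional program~\eqref{prob2} --- so that revenue-optimality follows at once; BIC and IR will be verified along the way via Lemmas~\ref{lem:BIC2D}, \ref{thm:one}, \ref{lemma:ir}, \ref{lem:BICSuffb}. To see that $q^*$ solves~\eqref{prob2}, note that $\mathcal{S}$ is a \emph{finite} set and the constraint $q(\theta,b)\in\mathcal{S}$ decouples across type profiles, so maximizing the integral/sum in~\eqref{prob2} reduces to maximizing $\sum_i\bigl(a(b_i)\mathbf{A}_i^T\bigr)w_i(\theta_i,b_i)$ pointwise over $\mathbf{A}\in\mathcal{S}$, which is exactly~\eqref{qi_opt}; a measurable selection exists because $\mathcal{S}$ is finite (fix an enumeration of $\mathcal{S}$ and take the first maximizer). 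Since feasibility forces $a(b_i)\mathbf{A}_i^T=\sum_{j\le|\mathcal{B}_{b_i}|}\mathbf{A}(i,j)\in\{0,1\}$, the objective is $\sum_i w_i(\theta_i,b_i)\,\mathds{1}\{i\text{ is matched to a good in }\mathcal{B}_{b_i}\}$, i.e.~a max-weight bipartite matching; and since a good given to $i$ outside $\mathcal{B}_{b_i}$ adds nothing yet consumes a good, the selection may be taken to respect Assumption~\ref{assum:Nonz}.

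Next I would establish the monotonicity of the interim allocation, which is needed for the one-dimensional BIC constraints. Fix $\theta_{-i},b_{-i}$ and a reported flexibility level $c_i$, and write the optimal matching value as $\max\{\,w_i(\theta_i,c_i)+A,\ B\,\}$, where $A$ is the best value of the rest of the matching when $i$ is matched within $\mathcal{B}_{c_i}$ and $B$ is the best value with $i$ unmatched; neither $A$ nor $B$ depends on $\theta_i$. Because Assumption~\ref{assum:MHRC} makes $w_i(\cdot,c_i)$ strictly increasing, the indicator that $i$ is matched is $0$ below a single threshold in $\theta_i$ and $1$ above it (either value allowed at the threshold), hence non-decreasing; therefore $a(b_i)q_i^{*T}(\theta_i,\theta_{-i},b)$ is non-decreasing in $\theta_i$, and integrating over $(\theta_{-i},b_{-i})$ shows $a(b_i)Q_i^{*T}(r_i,b_i)$ is non-decreasing in $r_i$. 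For the flexibility level, raising $c_i$ enlarges $\mathcal{B}_{c_i}$ (so $A$ weakly increases) and strictly raises $w_i(\theta_i,c_i)$ by~\eqref{MHRC_2}; thus $w_i(\theta_i,c_i)+A$ strictly increases in $c_i$, so if $i$ is matched under a report $c_i'<c_i$ it remains matched under $c_i$, and hence $a(c_i)Q_i^{*T}(\theta_i,c_i)$ is non-decreasing in $c_i$.

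For the payments, taking the interim expectation of~\eqref{ti_opt} and interchanging expectation with integration gives $T_i(r_i,b_i)=r_i\,a(b_i)Q_i^{*T}(r_i,b_i)-a(b_i)\int_{\theta_i^{min}}^{r_i}Q_i^{*T}(s,b_i)\,ds$, which is precisely~\eqref{Thrm1} with $K_i(b_i)=0$; combined with the monotonicity of $a(b_i)Q_i^{*T}(\cdot,b_i)$ this is the BIC constraint for misreporting only valuation, by Lemma~\ref{thm:one}. Putting $r_i=\theta_i^{min}$ gives $T_i(\theta_i^{min},c_i)=\theta_i^{min}\,a(c_i)Q_i^{*T}(\theta_i^{min},c_i)$; by Assumption~\ref{assum:Negwmin}, $w_i(\theta_i^{min},c_i)<0$, so matching $i$ strictly decreases the matching objective, $q^*$ never matches $i$ at valuation $\theta_i^{min}$, and therefore $a(c_i)Q_i^{*T}(\theta_i^{min},c_i)=0$ and $T_i(\theta_i^{min},c_i)=0$. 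This gives simultaneously the IR condition~\eqref{Ki} (with equality, so IR follows by Lemma~\ref{lemma:ir}) and hypothesis (ii) of Lemma~\ref{lem:BICSuffb}, whose hypothesis (i) is the flexibility-level monotonicity already shown; hence the BIC constraint for misreporting only flexibility level holds, and Lemma~\ref{lem:BIC2D} then upgrades the two one-dimensional BIC constraints to the full BIC constraint~\eqref{BIC_i_QT}. As $(q^*,t^*)$ is now BIC and IR with $K_i(b_i)=0$ and $q^*$ solving~\eqref{prob2}, Lemma~\ref{lemma:revenue} gives that it is revenue-maximizing.

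The main obstacle I anticipate is the monotonicity of $q^*$ in $\theta_i$ and in $c_i$: this is the one place where the combinatorial structure of $\mathcal{S}$ must be exploited, and it is precisely where the strict inequality~\eqref{MHRC_2} of the generalized monotone hazard rate condition is genuinely needed, in order to rule out the ambiguity created by ties among optimal matchings when the tie-breaking built into the measurable selection could otherwise point the ``wrong'' way. Everything else amounts to substituting $(q^*,t^*)$ into the previously established Lemmas~\ref{lem:BIC2D}--\ref{lemma:revenue}.
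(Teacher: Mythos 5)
Your proposal is correct, and its overall skeleton is exactly the paper's: verify conditions (i)--(iii) of Lemmas~\ref{thm:one}, \ref{lemma:ir} and \ref{lem:BICSuffb} for $(q^*,t^*)$, note that $K_i(b_i)=0$ and that pointwise maximization in \eqref{qi_opt} solves the functional program \eqref{prob2}, and conclude via Lemmas~\ref{lem:BIC2D} and \ref{lemma:revenue}. The one place where you genuinely depart from the paper is the monotonicity of the allocation in the reported flexibility level, i.e.\ \eqref{eq:toprove}. The paper proves this by a revealed-preference argument: it writes the optimality of $q^*$ at the two type profiles $(\theta,\gamma,b_{-i})$ and $(\theta,\lambda,b_{-i})$, adds the resulting inequalities \eqref{Monton_b_a}--\eqref{Monton_b_b} to obtain \eqref{Add_aAndb}, and then runs a case analysis on the quantities $z_1,z_2$, invoking the strict inequality \eqref{MHRC_2} to derive a contradiction when customer $i$ would be served under $\gamma$ but not under $\lambda$. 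You instead decompose the matching value as $\max\{w_i(\theta_i,c_i)+A(c_i),\,B\}$ with $A(c_i)$ non-decreasing in $c_i$ (a larger $\mathcal{B}_{c_i}$ only enlarges the feasible set for the residual matching) and $B$ independent of $c_i$, so that \eqref{MHRC_2} makes the ``match $i$'' branch strictly better under $\lambda$ whenever it was weakly better under $\gamma$. Both arguments use \eqref{MHRC_2} in the same essential way to defeat adverse tie-breaking; yours is arguably more transparent because it exposes the threshold structure of the ex-post allocation directly, and the same decomposition simultaneously delivers the monotonicity in $\theta_i$ that the paper only cites from B\"orgers. The remaining steps (interim expectation of \eqref{ti_opt} giving \eqref{Thrm1} with $K_i(b_i)=0$, Assumption~\ref{assum:Negwmin} forcing $T_i(\theta_i^{min},c_i)=0$ and hence \eqref{Ki}, and the appeal to \eqref{TotExpRev}) coincide with the paper's.
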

\begin{proof} See Appendix \ref{sec:Thm2Pf}. \end{proof}
The optimal allocation matrix $q^*(\theta, b)$ given in \eqref{qi_opt} is the solution of an integer program and hence computationally hard to obtain. Moreover, each type profile $(\theta, b) \in \Theta \times \{1, 2, \cdots, k\}^N$ requires the solution of a different integer program. Similarly, the characterization of payments given by \eqref{ti_opt} is not very useful from a computational viewpoint as it requires the solution of a continuum of integer programs. In the next section, we leverage the nested structure imposed on  customers' flexibility sets  to simplify the optimal mechanism. 
%
\vspace{-1.8em}
\section{A Candidate Revenue Maximizing Mechanism} \label{sec:NestedPhis} 
Based on their true flexibility sets, we can divide the customers into $k$ classes: $\mathcal{C}_l$ is the set of customers with flexibility set $\mathcal{B}_l$. Clearly, $\mathcal{N} = \bigcup\limits_{i=1}^k \mathcal{C}_i$ and for $i \neq j$, $\mathcal{C}_i \cap \mathcal{C}_j = \emptyset$. We define
\begin{equation}  \label{mi}
\begin{split}
 n_l &:= |\mathcal{C}_l|, ~l =1,\ldots, k,\\
m_l &:= |\mathcal{B}_l \: \backslash \: \mathcal{B}_{l-1}| ,  \; l =2,\ldots, k   , \; \; 
m_1 := |\mathcal{B}_1|.
\end{split}
\end{equation}

We also define the vectors \textbf{n} and \textbf{m} as
\begin{equation} \label{SupplyDemandProfs}
\begin{split}
\textbf{n} := (n_1, n_2, \cdots, n_k) \;  , \; \; 
\textbf{m} := (m_1, m_2, \cdots, m_k).
\end{split}
\end{equation}
The vector \textbf{n} is referred to as the \emph{demand profile} and the vector \textbf{m} is referred to as the \emph{supply profile}.
\vspace{-1em}
\subsection{Supply Adequacy Problem} \label{sec:BinServProb}

Before describing the optimal mechanism, we will need to answer two questions:
\begin{enumerate}
\item Given a supply profile \textbf{m} and a demand profile \textbf{n}, can the available goods be used to satisfy all customers? In other words, does there exist an allocation matrix $\textbf{A}\in\{0,1\}^{N\times M}$ such that \vspace{-.8em}
\begin{equation} \label{ServeAll}
\begin{split}
\sum\limits_{j \in \phi_i} \: \textbf{A}(i,j) = 1, ~~\forall \; i \in \mathcal{N}, \; \; 
\sum\limits_{i=1}^{N} \: \textbf{A}(i,j) \le 1 , ~~ \forall \; j \in \mathcal{M}.   
\end{split}
\end{equation}
The above conditions on \textbf{A} ensure that each customer gets a good from its flexibility set and that a good is not allocated to multiple customers. If such an allocation matrix exists, we say that the supply profile \textbf{m} is adequate for the demand profile \textbf{n}.

\item If the supply profile \textbf{m} is not adequate for the demand profile \textbf{n}, we  have to remove some customers from the demand profile to achieve adequacy. What is the minimum number of customers that must be removed to achieve adequacy?
\end{enumerate}

Borrowing ideas from \cite{7098381}, we provide answers to the above questions in Lemmas \ref{lem:Adeq} and \ref{lem:MinRemov} below.

\begin{lemma} \label{lem:Adeq}
 We say that $\textbf n \prec_w \textbf m$ if the following $k$ inequalities hold: \vspace{-1em}
\begin{equation} \label{AdeqIneqs}
\begin{split}
\sum\limits_{i=1}^l n_i \; \le \; \sum\limits_{i=1}^l m_i,
~~~  l = 1, 2, \cdots, k.
\end{split}
\end{equation}
(a) The  supply profile \textbf{m} is adequate for the demand profile \textbf{n} if and only if $\textbf n \prec_w \textbf m$. 
(b) If the supply profile is adequate for the demand profile, a feasible allocation is obtained as follows: Arrange customers in order of increasing flexibility level; then the $i$th customer in this order gets the $i$th good. 
  \end{lemma}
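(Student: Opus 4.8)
The plan is to prove part (a) by establishing the two implications separately, and to obtain part (b) for free from the construction used in the ``if'' direction. The single fact that drives everything is the bookkeeping identity $\sum_{i=1}^{l} m_i = |\mathcal{B}_l|$, which is immediate from the definitions in \eqref{mi}; together with the indexing convention (the first $|\mathcal{B}_l|$ goods are precisely the goods of $\mathcal{B}_l$) and the nestedness in \eqref{FlexSets_nest}, this says that good $j$ belongs to $\mathcal{B}_l$ if and only if $j \le \sum_{i=1}^{l} m_i$. I would state this equivalence up front and then use it repeatedly.

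For necessity (adequacy $\Rightarrow \mathbf{n}\prec_w\mathbf{m}$), fix $l\in\{1,\dots,k\}$ and let $\mathbf{A}$ be a feasible allocation as in \eqref{ServeAll}. Every customer in $\mathcal{C}_1\cup\cdots\cup\mathcal{C}_l$ has flexibility set $\mathcal{B}_j$ with $j\le l$, hence $\mathcal{B}_j\subseteq\mathcal{B}_l$, so the (unique) good it receives under $\mathbf{A}$ lies in $\mathcal{B}_l$. Because $\mathbf{A}$ gives each good to at most one customer, these $\sum_{i=1}^{l} n_i$ customers receive \emph{distinct} goods of $\mathcal{B}_l$; since $\mathcal{B}_l$ contains only $|\mathcal{B}_l|=\sum_{i=1}^{l}m_i$ goods, we get $\sum_{i=1}^{l} n_i\le\sum_{i=1}^{l}m_i$. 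As $l$ was arbitrary, $\mathbf{n}\prec_w\mathbf{m}$.

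For sufficiency I would simply exhibit the allocation of part (b) and verify \eqref{ServeAll}. Order the $N$ customers by non-decreasing flexibility level (ties broken arbitrarily) and assign the $i$th customer in this order the $i$th good. All the goods needed exist, since $N=\sum_{i=1}^{k}n_i\le\sum_{i=1}^{k}m_i=|\mathcal{B}_k|\le M$. Now take any customer in class $\mathcal{C}_l$: because the ordering is by non-decreasing flexibility level, this customer occupies one of the first $\sum_{i=1}^{l}n_i$ positions, so the good assigned to it has index at most $\sum_{i=1}^{l}n_i\le\sum_{i=1}^{l}m_i=|\mathcal{B}_l|$, i.e.\ it lies in $\mathcal{B}_l$, exactly this customer's flexibility set. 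Distinct customers are given distinct goods by construction, so no good is over-allocated. Hence the matrix satisfies \eqref{ServeAll}, which proves simultaneously that $\mathbf{m}$ is adequate for $\mathbf{n}$ and that the explicit rule in (b) works.

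I do not expect a real obstacle: the statement is a Hall/majorization-type criterion and the argument is elementary counting once the identity $\sum_{i\le l} m_i=|\mathcal{B}_l|$ is isolated. (If one preferred, sufficiency could instead be deduced from Hall's marriage theorem applied to the bipartite customer--good graph, whose deficiency is controlled precisely by the inequalities \eqref{AdeqIneqs}, but the greedy ``sort-and-assign'' construction is cleaner and yields part (b) directly.) The only point needing a little care is to keep the two roles of the index $l$ straight---a flexibility level of customers on the one hand, a cutoff on good indices on the other---which the identity above reconciles.
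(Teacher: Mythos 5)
Your proof is correct and complete, and it follows exactly the route the paper intends: the paper omits the proof as ``straightforward'' but its hint for part (b) --- that the first $|\mathcal{B}_l|$ goods are the goods of $\mathcal{B}_l$ --- is precisely the identity $\sum_{i\le l}m_i=|\mathcal{B}_l|$ you isolate, and your sort-and-assign construction is the one stated in part (b). Both directions of your counting argument are sound, so you have in effect supplied the details the authors left out.
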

\begin{proof} 
The proof is straightforward and therefore omitted due to space limitations. For part (b), recall that the goods are indexed such that the first $|\mathcal{B}_l|$ goods belong to $\mathcal{B}_l$, for $l=1,\ldots,k$. 
 \end{proof}
If the supply profile \textbf{m} is not adequate, we  have to remove some customers from the demand profile. Consider a demand profile $\tilde{\textbf{n}} \leq \textbf{n}$ obtained by removing some customers. This new demand profile will result in adequacy if and only if  $\tilde{\textbf{n}} \prec_w \textbf{m}$. 
Thus, the minimum number of customers to be removed to achieve adequacy is given by the following optimization problem: \vspace{-.3cm}
\begin{equation} \label{OptimRemovUsers}
\begin{split}
\min\limits_{\tilde{\textbf{n}}} \; \; \sum\limits_{i=1}^k (n_i-\tilde{n}_i) \; \; , \; \; 
\text{subject to} \; \;  \tilde{\textbf{n}} \prec_w \textbf{m} \; ,\;  \tilde{\textbf{n}} \; \le \textbf{n}.
\end{split}
\end{equation}
$\tilde{\textbf{n}}$ in the above optimization problem is a vector of non-negative integers. 
The above integer program has a simple solution described in the following lemma.

\begin{lemma}\label{lem:MinRemov}
 Define $r_1^* := (n_1 - m_1)^+$. For $ 2 \leq j \leq k$, recursively define $r^*_j$ as the solution of the following one-dimensional integer program: 
\begin{equation} \label{r_jOpt}
\begin{split}
r_j^* := &\argmin\limits_{r_j \in \mathbb{Z}^+} \; \; r_j \\
\text{subject to} \; \; \; \;  
&\sum\limits_{l=1}^{j-1} (n_l - r_l^*) + (n_j - r_j)\;\le\;\sum\limits_{l=1}^{j} m_l~~.
\end{split}
\end{equation}
Equivalently, $r_j^*  :=  \Big( \sum\limits_{l=1}^j (n_l - m_l) - \sum\limits_{l=1}^{j-1} r_l^* \Big)^+$. 
Then, (i) for $j=1,\ldots,k$, at least $\sum_{i=1}^j r^*_i$  customers must be removed from the first $j$ classes to satisfy the   inequalities  \eqref{AdeqIneqs} of Lemma \ref{lem:Adeq}; (ii)
$\sum_{i=1}^k r^*_i$ is the minimum value of  the integer program in \eqref{OptimRemovUsers}. 


\end{lemma}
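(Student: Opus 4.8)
The plan is to establish both claims (i) and (ii) together by an induction on $j$, exploiting the telescoping structure of the constraints in \eqref{AdeqIneqs}. First I would reformulate the feasibility region of \eqref{OptimRemovUsers}: writing $r_l := n_l - \tilde{n}_l \ge 0$, the constraint $\tilde{\textbf{n}} \prec_w \textbf{m}$ becomes $\sum_{l=1}^j r_l \ge \big(\sum_{l=1}^j n_l - \sum_{l=1}^j m_l\big)$ for every $j = 1,\ldots,k$, while $\tilde{\textbf{n}} \le \textbf{n}$ is just $r_l \le n_l$. Denote $D_j := \big(\sum_{l=1}^j (n_l - m_l)\big)^+$ — a lower bound on how many customers must be removed from the first $j$ classes, since removing fewer would violate the $j$th inequality of \eqref{AdeqIneqs}. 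This immediately gives the ``at least'' direction of (i): any feasible $\tilde{\textbf{n}}$ has $\sum_{l=1}^j r_l \ge D_j$, and I would check that $D_j = \big(\sum_{l=1}^j r_l^*\big)$ by unrolling the recursion $r_j^* = \big(\sum_{l=1}^j (n_l - m_l) - \sum_{l=1}^{j-1} r_l^*\big)^+$.

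For the matching upper bound, I would exhibit an explicit feasible $\tilde{\textbf{n}}$ attaining $\sum_{i=1}^k r_i^*$ removals. The natural candidate sets $\tilde{n}_l = n_l - r_l^*$. Two things must be verified: that this is feasible, and that $r_l^* \le n_l$ (so that $\tilde{n}_l \ge 0$). For feasibility, I would show by induction that $\sum_{l=1}^j (n_l - r_l^*) \le \sum_{l=1}^j m_l$ for all $j$; this is essentially the defining property of $r_j^*$ as the minimizer of the one-dimensional program \eqref{r_jOpt}, once one observes that the recursion was built precisely so that partial sums of the $(n_l - r_l^*)$ never exceed the corresponding partial sums of $m_l$ — either $r_j^* = 0$ and the previous slack already suffices, or $r_j^*>0$ and the $j$th inequality holds with equality. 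For $r_j^* \le n_j$: from $r_j^* = \big(\sum_{l=1}^{j}(n_l-m_l) - \sum_{l=1}^{j-1} r_l^*\big)^+$ and the fact (from the induction) that $\sum_{l=1}^{j-1}(n_l - r_l^*) \le \sum_{l=1}^{j-1} m_l$, I get $\sum_{l=1}^{j}(n_l - m_l) - \sum_{l=1}^{j-1} r_l^* \le n_j$, hence $r_j^* \le n_j^+ = n_j$. Combining the lower bound $\sum_{l=1}^j r_l \ge \sum_{l=1}^j r_l^*$ valid for every feasible point with the feasibility of $\tilde{n}_l = n_l - r_l^*$ yields that $\sum_{i=1}^k r_i^*$ is exactly the optimal value of \eqref{OptimRemovUsers}, which is (ii), and also completes (i).

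The step I expect to be the main (minor) obstacle is making the induction for feasibility of the candidate $\tilde{\textbf{n}}$ fully rigorous when the positive-part operator $(\cdot)^+$ is active, i.e.\ handling the case split between $r_j^* = 0$ (where one must carry forward the ``unused'' slack $\sum_{l=1}^{j-1} m_l - \sum_{l=1}^{j-1}(n_l - r_l^*)$ into the $j$th inequality) and $r_j^* > 0$ (where the $j$th partial-sum inequality becomes tight). A clean way to package this is to prove the single invariant $\sum_{l=1}^{j} (n_l - r_l^*) = \min\big\{\sum_{l=1}^{j} n_l,\ \sum_{l=1}^{j} m_l\big\}$ by induction on $j$; both the lower bound $D_j = \sum_{l=1}^j r_l^*$ and the feasibility of the candidate follow from this invariant at once. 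Once that identity is in hand, the equivalence of the two formulas for $r_j^*$ and both conclusions of the lemma are immediate.
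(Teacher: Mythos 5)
Your overall strategy (a lower bound valid for every feasible point of \eqref{OptimRemovUsers}, matched by the explicit feasible candidate $\tilde{n}_l = n_l - r_l^*$) is the same as the paper's, and your upper-bound half is sound: feasibility of the candidate is immediate from the defining constraint in \eqref{r_jOpt}, and your argument that $r_j^* \le n_j$ goes through. The gap is in the lower bound. The identity you rely on, $\sum_{l=1}^{j} r_l^* = D_j := \bigl(\sum_{l=1}^{j}(n_l-m_l)\bigr)^+$, is false. Unrolling the recursion via $x + (y-x)^+ = \max\{x,y\}$ actually gives
\begin{equation*}
\sum_{l=1}^{j} r_l^* \;=\; \max_{0 \le i \le j}\ \sum_{l=1}^{i}(n_l - m_l),
\end{equation*}
the running maximum of the partial deficits, not the final partial deficit. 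Concretely, take $k=2$, $\mathbf{n}=(2,0)$, $\mathbf{m}=(1,5)$: then $r_1^*=1$, $r_2^*=0$, so $\sum r_l^* = 1$, while $D_2 = (1-5)^+ = 0$. The same example refutes your proposed invariant $\sum_{l=1}^{j}(n_l - r_l^*) = \min\bigl\{\sum_{l=1}^{j} n_l,\ \sum_{l=1}^{j} m_l\bigr\}$ (left side is $1$, right side is $2$), so the ``clean packaging'' at the end does not rescue the argument. The underlying reason is the nesting: surplus goods in $\mathcal{B}_j\setminus\mathcal{B}_{j-1}$ cannot serve the less flexible classes, so an early deficit can never be repaid by later surplus, and the binding constraint need not be the $j$th one.

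The fix is not hard but requires combining constraints across indices rather than using the single $j$th inequality: for any feasible $\tilde{\mathbf{n}}$ and any $i \le j$, the $i$th adequacy inequality gives $\sum_{l=1}^{i} r_l \ge \sum_{l=1}^{i}(n_l-m_l)$, and $r_l \ge 0$ for $i < l \le j$ then gives $\sum_{l=1}^{j} r_l \ge \max_{0\le i\le j}\sum_{l=1}^{i}(n_l-m_l) = \sum_{l=1}^{j} r_l^*$. The paper instead proves the same lower bound by induction on $j$ with a case split on whether $r_{j+1}^* = 0$ (where the induction hypothesis alone suffices) or $r_{j+1}^* > 0$ (where the $(j+1)$st constraint is tight and is the one invoked); either route works once the correct closed form for $\sum_{l=1}^{j} r_l^*$ is in place.
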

\begin{proof} See Appendix \ref{sec:MinRemovPf}. \end{proof}
\vspace{-1.5em}
\subsection{Optimal Allocation } \label{sec:AllocRule}
We can now use the results of Section \ref{sec:BinServProb} to find the optimal allocation for a given type profile $(\theta,b)$. Recall from Theorem \ref{thm:2} that the optimal allocation is given as \vspace{-.5em}
\[ q^*(\theta, b) \in \argmax\limits_{\mathbf{A} \in \mathcal{S}} \sum\limits_{i=1}^N \left(a(b_i)  \mathbf{A}_i^T\right) w_i(\theta_i, b_i) .  \]
We describe the optimal allocation in the following steps:
\begin{enumerate}
\item Firstly, any customer $l$ with $w_l(\theta_l, b_l) \leq 0$ is immediately removed from consideration (that is, it is not allocated any good).  Since virtual valuation is a non-decreasing function of true valuation, $w_l(\theta_l, b_l) \leq 0 $ if and only if $\theta_l \leq \theta^{res}_{l,b_l}$, where $\theta^{res}_{l,b_l}$ is a threshold based on the probability distribution of $\theta_l$ conditioned on the flexibility level $b_l$. This threshold is called the reserve price for  customer $l$ with flexibility level $b_l$. 

For each class of customers, we define the subset of customers who have positive virtual valuations:
\begin{equation}
\mathcal{C}_i^+ := \{l \in \mathcal{C}_i : w_l(\theta_l, i) >0  \}.
\end{equation}
Let $n^+_i = |\mathcal{C}_i^+|$. Define $r^*_1,\ldots,r^*_k$ as in Lemma \ref{lem:MinRemov} by replacing $n_i$ with $n^+_i$ for all $i$.  
\item  Let $\mathcal{L}_1 := \mathcal{C}_1^+$. From $\mathcal{L}_1$, $r^*_1$ customers with the lowest virtual valuations are removed from consideration\footnote{Ties are resolved randomly. For continuous valuations, ties happen with zero probability and therefore the allocation rule for ties does not affect expected revenue.}. The set of remaining customers in $\mathcal{L}_1$ is denoted by $\mathcal{N}_1$.
\item We now proceed iteratively:  For $2 \leq i \leq k$, given the set $\mathcal{N}_{i-1}$, define  $\mathcal{L}_i := \mathcal{N}_{i-1} \bigcup \mathcal{C}^+_i$. Remove $r^*_i$ customers with lowest virtual valuations from $\mathcal{L}_i$. The set of remaining customers in $\mathcal{L}_i$ is now defined as $\mathcal{N}_i$.
\item After the $k^{th}$ iteration, all customers in $\mathcal{N}_k$ are allocated a good from their respective flexibility sets.
\end{enumerate}

The iterative procedure described above is  outlined in  Algorithm \ref{alg:alloc}.

The optimality of the above allocation can be intuitively explained as follows: Firstly, it is clear that an optimal allocation should not give any goods to customers with non-positive virtual valuations. Among the remaining customers of class $\mathcal{C}_1$, at least $r^*_1$ customers cannot be served (see Lemma \ref{lem:MinRemov} with $n_i$ replaced by $n^+_i$ for all $i$). It is easy to see that the $r^*_1$ customers with the lowest virtual valuations should be removed. This argument can be used iteratively. At the $i$th iteration, at least $r^*_i$ additional customers need to be removed from the first $i$ classes otherwise the $i$th adequacy inequality would be violated. An optimal allocation should remove $r^*_i$ customers with lowest virtual valuations. After the $k$th iteration, exactly $\sum_{i=1}^k r^*_i$ customers have been removed and the remaining customers' demand profile satisfies all the adequacy inequalities.

%

The above optimal allocation procedure can also be described using   thresholds. Define
\vspace{-.4em}
\begin{equation} \label{ThetaThr}
w_i^{thr} := (r_i^*)^{\text{th}} \; \text{lowest virtual valuation in} \; \mathcal{L}_i , i =  1,2, \cdots, k.
\end{equation}
(If $r^*_i=0$, $w_i^{thr}=0$.)\\
Then, at iteration $i$, customers that have virtual valuations less than or equal to $w_i^{thr}$  will be removed from the set $\mathcal{L}_i$. 
\begin{algorithm} [ht]
  \caption{Pseudocode for Computing the Optimal Allocation}
  \label{alg:alloc}
  \begin{algorithmic}[1]
\State   Remove all  consumers with $\; w_l(\theta_l, b_l) \le 0$.
\State Define $\; \mathcal{C}_i^+ \coloneqq \{ l \in \mathcal{C}_i : w_l(\theta_l, i) > 0 \},$ and let $n_i^+ = |\mathcal{C}_i^+|$ for $i = 1, 2, \cdots, k$. 
\State Compute $r_1^*, \cdots, r_k^*$ through: 

$\; \; r_j^*  =  \Big( \sum\limits_{l=1}^j (n^+_l - m_l) - \sum\limits_{l=1}^{j-1} r_l^* \Big)^+ \; , \; j = 1, 2, \cdots, k$. 
\State Define $\mathcal{N}_0 := \emptyset$.
\ForAll{ $i = 1, 2, \cdots, k$}: 
\State Given the set $\mathcal{N}_{i-1}$, define $\mathcal{L}_i \coloneqq \mathcal{N}_{i-1} \cup \mathcal{C}_i^+$
\State Define $w_i^{thr} \coloneqq (r_i^*)^{\text{th}} \; \text{lowest virtual valuation in} \; \mathcal{L}_i$   
\State  Define $\mathcal{N}_i \coloneqq \{ l \in \mathcal{L}_i : w_l(\theta_l, b_l) > w_i^{thr} \}$ 
\State Keep the consumers in $\mathcal{N}_i$, remove the ones in $\mathcal{L}_i \setminus \mathcal{N}_i$  
\State $i \longleftarrow i+1$ 
\EndFor
\State All customers in $\mathcal{N}_k$ are allocated goods in order of increasing flexibility level (as per Lemma \ref{lem:Adeq}).
  \end{algorithmic}
\end{algorithm} 
%
%
%
Under the optimal allocation,   customer $l$ in class $\mathcal{C}_i$ gets a desired good if its virtual valuation exceeds $0$ and thresholds $w_i^{thr}, w_{i+1}^{thr}, \cdots, w_k^{thr}$. Let us define
\vspace{-.5em}
\begin{equation}\label{valThrsh}
\theta_{l,i}^{thr} = \Big\{ x  : w_l(x,i) = \max\{0,w_i^{thr}, w_{i+1}^{thr}, \cdots, w_k^{thr}\} \Big\}.
\end{equation}
Because of the monotonicity of virtual valuation as a function of true valuation, customer $l$ in class $\mathcal{C}_i$ gets a good if $\theta_l > \theta_{l,i}^{thr}$. 
Thus, 
\vspace{-.7em}
\begin{equation} \label{Alloc}
\begin{split}
a(i) q_l^{*T}(\theta, b) = \left\{
    \begin{array}{ll}
        1 & \text{if} \minitab \theta_l > \theta_{l,i}^{thr}  \\
        0 & \text{otherwise}  
    \end{array}
\right., \\
\forall l \in \mathcal{C}_i, \; \;   \forall i = 1, 2, \cdots, k .
\end{split}
\end{equation}
\subsection{Payment Functions} \label{sec:TaxFuncs}
We can now use the optimal allocation rule described in section \ref{sec:AllocRule} to simplify customers' payment functions. 
From \eqref{ti_opt} the optimal payment function for customer $l$ in flexibility class $\mathcal{C}_i$ has the following form:
\begin{align} \label{Tax_i}
t^*_l(\theta, b) &= \theta_l a(i) q_l^{*T}(\theta, b) - a(i)  \int\limits^{\theta_l}_{\theta_l^{min}} q_l^{*T}(s, \theta_{-l}, b) \: ds. 
\raisetag{0.3\baselineskip}
\end{align}
Using the definition of  $a(i) q_l^{*T}(\theta, b)$ given in \eqref{Alloc}, $t_l(\theta, b)$ can be simplified as: 
\begin{enumerate}
\item If $\theta_l > \theta_{l,i}^{thr}$, \vspace{-1em}
\begin{equation} \label{taxFunc}
\begin{split}
t^*_l(\theta, b) &= \theta_l  - \int\limits_{\theta_l^{min}}^{\theta_{l,i}^{thr}} \underbrace{a(i)q_l^{*T}(s, \theta_{-l}, b)}_{=0} \; ds \\
&- \int\limits_{\theta_{l,i}^{thr}}^{\theta_l} \underbrace{a(i)q_l^{*T}(s, \theta_{-l}, b)}_{=1} \; ds = \theta_{l,i}^{thr}. 
\end{split}
\end{equation}
\item If $\theta_l \leq  \theta_{l,i}^{thr}$, \vspace{-.6em}
 \begin{equation}\label{pay0}
 t^*_l(\theta,b) =0.
 \end{equation}
\end{enumerate}

The optimal allocation  and payments can thus be  computed through the straightforward threshold-based procedure constructed in Sections \ref{sec:AllocRule} and \ref{sec:TaxFuncs}. By using the nested structure of the flexibility sets, this procedure obviates the need to solve the computationally hard integer program formulated in Theorem \ref{thm:2}. 
\vspace{-.5em}
\begin{remark}\label{remark:expostIR}
It should be noted that  equations \eqref{Alloc}-\eqref{pay0} imply that the mechanism $(q^*,t^*)$ proposed in sections \ref{sec:AllocRule} and \ref{sec:TaxFuncs} is  \textit{ex post} individually rational, that is, at the truthful equilibrium the mechanism guarantees that each consumer gets non-negative  utility for every realization of consumers' types. In particular, under $(q^*,t^*)$ a consumer who receives no good does not pay anything. 
\end{remark}
\vspace{-1em}
\subsection{Computational Complexity of the Algorithm}\label{sec:comp-alg} 
To get a better idea of the computational complexity of the  solution approach  developed in Sections \ref{sec:AllocRule} and \ref{sec:TaxFuncs}, we take a closer look at the key steps.

The  allocation procedure  requires the recursive evaluation of the quantities $r_1^*, \cdots, r_k^*$ as per the closed-form solutions given in step 3 of Algorithm \ref{alg:alloc}.
As is evident from these equations, computation of $r_1^*, \cdots, r_k^*$ is straightforward and needs nothing more than addition, subtraction and comparison with 0.

Once these quantities are obtained,  the allocation rule is  given by an iterative procedure that consists of  $k$ iterations. At each iteration the following steps are taken:
\begin{enumerate}
\item Sort the virtual valuations in the set $\mathcal{L}_i$. (\textit{sorting})
\item Find the  $r_i^*$th lowest virtual valuation in $\mathcal{L}_i$ and set $w_i^{thr}$ equal to that value. (\textit{counting}).
\item Remove from $\mathcal{L}_i$ consumers with virtual valuations less than or equal to $w_i^{thr}$ and keep the remaining ones in the new set $\mathcal{N}_i$. (\textit{deletion})
\item Move to iteration $i+1$.
\end{enumerate} 
As can be seen from the above steps, the only real computation involved in each iteration is sorting\footnote{Comparison-based sorting algorithms   have the worst-case complexity of $O(n \log{n})$ on $n$ inputs \cite[Part II]{cormen2009introduction}.} which is known to be computationally efficient. 


The allocation and payment procedure does require computing virtual valuations from the reported types and using the inverse mapping\footnote{$w_{l,i}^{min} = w_l(\theta_l^{min},i)$ and $w_{l,i}^{max} = w_l(\theta_l^{max},i)$.} $w^{-1}_l(\cdot,i): [w_{l,i}^{min}, w_{l,i}^{max}] \longrightarrow \Theta_l$ to find the thresholds $\theta_{l,i}^{thr}$ in \eqref{valThrsh}. However, we believe these mappings can be pre-computed with appropriate discretization  and stored in a lookup table to be used when needed. The necessity of computing virtual valuations from  types and vice versa is  a common feature of many mechanism design problems and not unique to our auction.



\begin{remark}\label{remark:bKnown}
Suppose that $\theta_l$ and $b_l$ are independent random variables for all $l \in \mathcal{N}$. In this case the virtual valuation for customer $l$ will take the following form
 \begin{equation}\label{vv_indep}
w_l(\theta_l) = \theta_l - \frac{1 - F_{l}(\theta_l)}{f_{l}(\theta_l)} \; , \; \forall l \in \mathcal{N},
\end{equation}
If we further assume that for all $l \in \mathcal{N}$, $\theta_l$ is distributed over the set $[\theta^{min}, \theta^{max}]$ according to the same probability density function $f$, then  the thresholds $\theta^{thr}_{l,i}$ in \eqref{valThrsh} do  not depend on $l$: 
\vspace{-.7em}
\begin{equation}\label{valThrsh2}
 \theta_{i}^{thr} =\Big\{ x  : w(x) = \max\{0,w_i^{thr}, w_{i+1}^{thr}, \cdots, w_k^{thr}\} \Big\}.
\end{equation}
 Moreover, we have $\theta^{thr}_{1} \geq \theta^{thr}_{2} \cdots \geq \theta^{thr}_{k}$.  The allocation and payment functions can be simplified as follows:   Customer $l$ in class $\mathcal{C}_i$ gets a good if $\theta_l > \theta^{thr}_i$ and its payment simplifies to: $\theta^{thr}_i \mathds{1}_{\{\theta_l > \theta^{thr}_i\}}$. It is evident in this case that more flexible customers pay less for the good than less flexible customers.
\end{remark}

\begin{remark}\label{remark:bdeg}
Suppose the probability distributions for customers' types are such that the flexibility levels are degenerate random variables. This essentially implies that the customers' flexibility sets are common knowledge.  If we further assume that customers' valuations given their flexibility level are identically distributed, then the same observation as in Remark \mbox{\ref{remark:bKnown}} follows: 
Customer $l$ in class $\mathcal{C}_i$ gets a good if $\theta_l > \theta^{thr}_i$ and its payment simplifies to: $\theta^{thr}_i \mathds{1}_{\{\theta_l > \theta^{thr}_i\}}$.
\end{remark}

\section{Conclusion}\label{sec:recap}
We studied the problem of designing revenue-maximizing auctions for allocating multiple goods to flexible customers. In our model, each customer is interested in a subset of goods known as its flexibility set and wants to consume one good from this set. A customer's flexibility set and its utility from consuming a good from its flexibility set are its private information.  We characterized the allocation rule for an incentive compatible, individually rational and revenue-maximizing auction in terms of  solutions to  integer programs. The corresponding payment rule was described by an integral equation. We then leveraged the nestedness of flexibility sets to simplify the optimal auction and provided a complete characterization of allocations and payments in terms of simple thresholds.

A possible extension of  our framework is the case where customers may  demand more than one good from their flexibility sets.
It would also be interesting to study this auction problem under dynamic settings where the set of customers and/or goods can change over time. In such a setting customers may have richer private information that includes their valuation, flexibility sets as well as their temporal presence information.  Moreover, dynamic models can  incorporate supply uncertainties to capture scenarios where the seller relies on uncertain and time-varying resources (such as renewable energy) to serve its customers. 
The auction mechanism then needs to make sequential decisions based on information revealed at or before the current time. 
Investigating these dynamic mechanism design problems will be a key task for  future research.
\vspace{-.5em}

%
\appendices
\section{Proof of Lemma 1}\label{sec:BICLemPrf}
Clearly \eqref{BIC_i_QT} implies  \eqref{BICTheta_i_QT} and \eqref{BICb_i_QT}. To prove the converse, consider flexibility levels $c_i$ and $b_i$ with $c_i \leq b_i$. From \eqref{BICb_i_QT}, we have
\begin{equation} \label{L_BIC_s1}
\begin{split}
&\theta_i \; a(b_i) \; Q^T_i(\theta_i, b_i) - T_i(\theta_i, b_i)  \\
&\ge  \theta_i \; a(b_i) \; Q^T_i(\theta_i, c_i) - T_i(\theta_i, c_i) .
\end{split}
\end{equation}
Consider $\theta_i, r_i \in \Theta_i$. From \eqref{BICTheta_i_QT} we have
\begin{equation} \label{L_BIC_s2}
\begin{split}
&\theta_i \; a(c_i) \; Q^T_i(\theta_i, c_i) - T_i(\theta_i, c_i)   \\
&\ge \theta_i \; a(c_i) \; Q^T_i(r_i, c_i) - T_i(r_i, c_i).
\end{split}
\end{equation}
Adding the inequalities in \eqref{L_BIC_s1} and \eqref{L_BIC_s2} we obtain:
\begin{equation} \label{L_BIC_s3}
\begin{split}
&\theta_i \; a(b_i) \; Q^T_i(\theta_i, b_i) - T_i(\theta_i, b_i) + \theta_i \; a(c_i) \; Q^T_i(\theta_i, c_i) \ge \\
&\theta_i \; a(c_i) \; Q^T_i(r_i, c_i) - T_i(r_i, c_i) + \theta_i \; a(b_i) \; Q^T_i(\theta_i, c_i).
\end{split}
\end{equation}
Because of Assumption \ref{assum:Nonz} we have $a(c_i) Q^T_i(\theta_i, c_i) = a(b_i) Q^T_i(\theta_i, c_i)$ and  $a(c_i) Q^T_i(r_i, c_i) = a(b_i) \; Q^T_i(r_i, c_i)$. \eqref{L_BIC_s3} can then be written as
\begin{equation} \label{BIC_gen}
\begin{split}
&\theta_i \; a(b_i) \; Q^T_i(\theta_i, b_i) - T_i(\theta_i, b_i) \ge \\
&\theta_i \; a(b_i) \; Q^T_i(r_i, c_i) - T_i(r_i, c_i),
\end{split}
\end{equation}
which is the two-dimensional BIC constraint of \eqref{BIC_i_QT}. This concludes the proof.
\vspace{-1em}
\section{Proof of Lemma \ref{thm:one}}\label{sec:thmBICValPrf}
\textit{Sufficiency:} 
Suppose $a(b_i) Q^T(r_i, b_i)$ is non-decreasing in $r_i$ and customer $i$'s expected payment is of the form given in \eqref{Thrm1}. Suppose customer $i$'s true type is  $(\theta_i, b_i)$ and it reports $(r_i, b_i)$. Its expected utility is:
\begin{equation}\label{Ui}
U_i(\theta_i, r_i, b_i, b_i) = \theta_i a(b_i) Q_i^{T}(r_i, b_i) - T_i(r_i, b_i).
\end{equation}
We can then use \eqref{Thrm1} to rewrite customer $i$'s expected utility as:
\begin{equation} \label{Ui_re}
\begin{split}
\hspace{-.3cm}U_i(\theta_i, r_i, b_i, b_i) &= (\theta_i - r_i) a(b_i) Q_i^{T}(r_i, b_i) \\
&+ a(b_i) \int\limits_{\theta_i^{min}}^{r_i} Q_i^{T}(s, b_i) ds - K_i(b_i). 
\end{split}
\end{equation}

We now need to show that $U_i(\theta_i, \theta_i, b_i, b_i) \ge U_i(\theta_i, r_i, b_i, b_i), \theta_i, r_i \in \Theta_i, b_i \in \{1, 2, \cdots, k\},$ to conclude Bayesian incentive compatibility in valuation for customer $i$. We use the form given in \eqref{Thrm1} to write
\begin{align} \label{IC_pr}
&U_i(\theta_i, \theta_i, b_i, b_i) - U_i(\theta_i, r_i, b_i, b_i)  \notag \\
&= a(b_i) \int\limits_{\theta_i^{min}}^{\theta_i} Q^T_i(s, b_i)ds - a(b_i) \int\limits_{\theta_i^{min}}^{r_i} Q^T_i(s, b_i)ds \notag \\
&+(r_i - \theta_i) a(b_i) Q_i^{T}(r_i, b_i)  \notag \\
&= (r_i - \theta_i) a(b_i) Q_i^{T}(r_i, b_i) + a(b_i) \int\limits_{r_i}^{\theta_i} Q^T_i(s, b_i) ds \notag \\
&= \int\limits_{r_i}^{\theta_i}  a(b_i)\{Q^T_i(s, b_i) - Q^T_i(r_i, b_i)\}ds.
\end{align}
It is straightforward to verify that because of $a(b_i) Q^T_i(r_i, b_i)$ being non-decreasing in $r_i$, the expression in \eqref{IC_pr} is non-negative for both $r_i < \theta_i$ and $r_i > \theta_i$. Hence
\begin{equation} \label{VP}
U_i(\theta_i, \theta_i,  b_i, b_i) \ge U_i(\theta_i, r_i, b_i, b_i) ~\mbox{for}~ \theta_i, r_i \in \Theta_i,
\end{equation}
which establishes Bayesian incentive compatibility of the mechanism $(q,t)$ in valuation for customer $i$. \\

\textit{Necessity:} 
Suppose $(q,t)$ is  Bayesian incentive compatible in valuation. Consider two candidate valuations $x,y \in \Theta_i, x < y$ that customer $i$ might have. First, assume $(x,b_i)$ is customer $i$'s true type. Then BIC in valuation implies
\begin{equation} \label{Nec_BIC1}
x a(b_i)Q^T_i(x, b_i) - T_i(x, b_i) \ge x a(b_i) Q^T_i(y, b_i) - T_i(y, b_i).
\end{equation}
Now, consider $(y,b_i)$ to be the true type. BIC in valuation gives
\begin{equation} \label{Nec_BIC2}
y a(b_i) Q^T_i(y, b_i) - T_i(y, b_i) \ge y a(b_i) Q^T_i(x, b_i) - T_i(x, b_i).
\end{equation}
Adding \eqref{Nec_BIC1} and \eqref{Nec_BIC2} and simplifying gives
\begin{equation} \label{Nec_BIC}
a(b_i) Q^T_i(y, b_i) \ge a(b_i) Q^T_i(x, b_i).
\end{equation}
Therefore, $a(b_i) Q^T_i(r_i, b_i)$ is non-decreasing in $r_i$. 

Let us define $V_i(\theta_i, b_i)$ as customer $i$'s expected utility when its valuation is $\theta_i$ and its flexibility level is $b_i$ and it adopts truth-telling strategy:
\begin{equation} \label{V}
V_i(\theta_i, b_i) \coloneqq U_i(\theta_i, \theta_i, b_i, b_i).
\end{equation}
Using Bayesian incentive compatibility  in valuation \eqref{V} can  be written as
\begin{equation} \label{Vdef}
\begin{split}
V_i(\theta_i, b_i) &= \max\limits_{r_i \in \Theta_i} U_i(\theta_i, r_i, b_i, b_i)\\
&=\max\limits_{r_i \in \Theta_i} \theta_i a(b_i) Q^T_i(r_i, b_i) - T_i(r_i, b_i).
\end{split}
\end{equation}

Using the  integral form of the Envelope Theorem as stated in Theorem 3.1 in \mbox{\cite[Chapter 3]{milgrom2004putting}} and  \mbox{\eqref{Vdef}} it follows that $V_i(\theta_i, b_i)$ satisfies the following equation:
\begin{equation} \label{Vi}
\begin{split}
V_i(\theta_i, b_i) = V_i(\theta_i^{min},b_i) + \int\limits_{\theta_i^{min}}^{\theta_i} a(b_i) Q_i^T(s, b_i) ds.
\end{split}
\end{equation}

Using  \mbox{\eqref{V}} and \mbox{\eqref{Ui}} in \eqref{Vi}, it then follows that $T_i(\theta_i, b_i)$ satisfies the following equation:
\begin{equation} \label{TiForm_Nec}
\begin{split}
T_i(\theta_i, b_i) &= T_i(\theta_i^{min}, b_i) - \theta_i^{min} a(b_i) Q^T_i(\theta_i^{min}, b_i) \\
&+ \theta_i a(b_i) Q^T_i(\theta_i, b_i) - a(b_i) \int\limits_{\theta_i^{min}}^{\theta_i} Q^T_i(s, b_i) ds.
\end{split}
\end{equation}

\eqref{TiForm_Nec} establishes \eqref{Thrm1} with $K_i(b_i) = T_i(\theta_i^{min}, b_i) - \theta_i^{min} a(b_i) Q^T_i(\theta_i^{min}, b_i)$.

\vspace{-.9em}
\section{Proof of Lemma \ref{lem:BICSuffb}}\label{sec:SuffbPrf}
Suppose $(q, t)$ is individually rational and Bayesian incentive compatible in valuation and satisfies  conditions (i) and (ii) of Lemma \ref{lem:BICSuffb}. For a customer of true type $(\theta_i, b_i)$ who reports $(\theta_i, c_i), c_i \le b_i$ consider
\begin{equation}\label{BICbL1}
\theta_i a(b_i) Q_i^T(\theta_i, b_i) - T_i(\theta_i, b_i) - \left(\theta_i a(b_i) Q_i^T(\theta_i, c_i) - T_i(\theta_i, c_i)\right).
\end{equation}
Using \eqref{Thrm1} from Lemma \ref{thm:one} and the second condition of Lemma \ref{lem:BICSuffb} for the two $T_i(\cdot, \cdot)$ terms in \eqref{BICbL1}, we obtain:
\begin{equation} \label{BICb_L2_s4}
\begin{split}
&\int\limits_{\theta_i^{\text{min}}}^{\theta_i} (a(b_i) \; Q_i^T(s,b_i) - a(c_i) \; Q_i^T(s,c_i)) ds \\
&+ \theta_i^{\text{min}} (a(b_i) \; Q_i^T(\theta_i^{\text{min}},b_i) - a(c_i) \; Q_i^T(\theta_i^{\text{min}},c_i)).
\end{split}
\end{equation}
Since $a(b_i)Q^T_i(r_i,b_i)$ is assumed to be non-decreasing in $b_i$, the integral term as well as the term $(a(b_i) \; Q_i^T(\theta_i^{\text{min}},b_i) - a(c_i) \; Q_i^T(\theta_i^{\text{min}},c_i))$ are non-negative. Thus,  the expression in \eqref{BICbL1} is non-negative and hence the BIC constraint in flexibility level (equation \eqref{BICb_i_QT}) is satisfied. 
\vspace{-1em}
\section{Proof of Lemma \ref{lemma:revenue}} \label{sec:revLemmaPf}

The total expected revenue  can be written as 
\begin{align} \label{ExpRev}
\mathbb{E}_{\theta, b}\Big\{\sum\limits_{i=1}^N t_i(\theta, b)\Big\} 
&= \sum\limits_{i=1}^N \mathbb{E}_{\theta_i, b_i}\Big[ \mathbb{E}_{\theta_{-i}, b_{-i}} [t_i(\theta_i, \theta_{-i}, b_i, b_{-i})] \Big] \notag\\
&= \sum\limits_{i=1}^N \mathbb{E}_{\theta_i, b_i}\Big[ T_i(\theta_i, b_i) \Big].
\end{align}
For a mechanism that is individually rational and Bayesian incentive compatible, we can use the result in Lemma \ref{thm:one} to plug in the expression for  $T_i(\theta_i, b_i)$. After some simplifications we obtain that 
\begin{equation} \label{ETi} \small
\begin{split}
&\mathbb{E}_{\theta_i, b_i}\Big[ T_i(\theta_i, b_i) \Big] \\
&= \mathbb{E}_{b_i}\Big[ K_i(b_i) \Big] \\
&+ \sum\limits_{b}\int_{\theta}\Big[ a(b_i) \; q_i^T(\theta, b) \Big( \theta_i - \frac{1 - F_{i}(\theta_i | b_i)}{f_{i}(\theta_i | b_i)}\Big)\Big] f(\theta, b) d\theta.
\end{split}
\end{equation}

We can now rewrite the auctioneer's total expected revenue in \eqref{ExpRev} as:
\begin{equation} \label{TotExpRev}
\begin{split}
&\sum\limits_{i=1}^N\mathbb{E}_{\theta_i,b_i}\Big[ T_i(\theta_i, b_i) \Big]  
= \sum\limits_{i=1}^N \mathbb{E}_{b_i}\Big[ K_i(b_i) \Big] \\
&+ \sum\limits_{b} \int_{\theta} \; \sum\limits_{i=1}^N \Big[ a(b_i) \; q_i^T(\theta, b) w_i(\theta_i, b_i)\Big] f(\theta, b) d\theta. 
\end{split}
\end{equation}
The second term on the right hand side in \eqref{TotExpRev} is completely determined by the choice of the allocation rule $q(\cdot, \cdot)$. 
Also, note that  Lemmas \ref{thm:one} and \ref{lemma:ir} imply that $K_i(b_i) = T_i(\theta_i^{\text{min}}, b_i) - \theta^{min}_ia(b_i)Q^T_i(\theta_i^{\text{min}}, b_i)  \le 0$. Therefore, a BIC and IR mechanism $(q,t)$ that maximizes the second term on the right hand side in \eqref{TotExpRev} and ensures that $K_i(b_i)=0$ for all $i$ and $b_i$ would provide the largest expected revenue among all BIC and IR mechanisms. 
\vspace{-1em}
\section{Proof of Theorem \ref{thm:2}}\label{sec:Thm2Pf}
We first establish that the mechanism $(q^*,t^*)$ is Bayesian incentive compatible and individually rational. Based on the results of Lemmas \ref{lem:BIC2D} - \ref{lem:BICSuffb}, it is sufficient to show the following: 
\begin{enumerate}[(i)]
\item Customer $i$'s expected payment on reporting $r_i$ and $c_i$, $T^*_i(r_i, c_i)$, satisfies \eqref{Thrm1},
\item $T^*_i(\theta_i^{\text{min}}, c_i) = 0 \; , \; \forall c_i \in \{1, 2, \cdots, k\}$, 
\item The expected allocation, $a(c_i) Q^{*T}_i(r_i, c_i)$, is non-decreasing in $r_i$ and $c_i$.
\end{enumerate}
By taking the expectation of $t^*_i(\theta,b)$ over $(\theta_{-i},b_{-i})$ in \eqref{ti_opt}, it is easily established that the expected payment satisfies \eqref{Thrm1} with $K_i(b_i)=0$. Furthermore,  since Assumption \ref{assum:Negwmin} states that $w_i(\theta_i^{min}, b_i) < 0$,  it follows that  $a(b_i) q_i^{*T}(\theta_i^{min}, \theta_{-i}, b) = 0$. If this were not the case then, $q^*$ could not have achieved the maximum in \eqref{qi_opt}. Evaluating \eqref{ti_opt} at $\theta_i^{min}$ then shows that 
 $t^*_i(\theta_i^{\text{min}}, \theta_{-i}, b) = 0$ which further implies that $T^*_i(\theta_i^{min}, b_i) = 0$.

%
%
%

In order to establish monotonicity of $a(c_i)Q^{*T}_i(r_i,c_i)$ in $r_i$, it is sufficient to argue that $a(c_i)q^T_i(r_i,\theta_{-i}, c_i, b_{-i})$ is non-decreasing in $r_{i}$. 
The proof is similar to the arguments in chapters 2-3 of \cite{borgers2015introduction} and basically follows from the fact virtual type $w_i(r_i,c_i)$ is non-decreasing in $r_i$.

To establish monotonicity of $a(c_i)Q^{*T}_i(r_i,c_i)$ in $c_i$, it suffices to show that for any two candidate flexibility levels $\gamma , \lambda \in \{1, 2, \cdots, k\}, \gamma < \lambda$, we will have 
\begin{equation}\label{eq:toprove}
a(\gamma)q^{*T}_i(\theta, \gamma, b_{-i}) \le a(\lambda)q^{*T}_i(\theta, \lambda, b_{-i}),
\end{equation}
 for all $\theta$ and $b_{-i}$. 
 
For the type profile $(\theta, \gamma, b_{-i})$, the maximum value of the objective function in \eqref{qi_opt} is $a(\gamma) \: q^{*T}_i(\theta, \gamma, b_{-i}) w_i(\theta_i, \gamma) + \sum\limits_{j\neq i} \: a(b_j) \: q^{*T}_j(\theta, \gamma, b_{-i}) \: w_j(\theta_j, b_j)$. Therefore, we must have
\begin{equation} \label{Monton_b_a}
\begin{split}
&a(\gamma) \: q^{*T}_i(\theta, \gamma, b_{-i}) w_i(\theta_i, \gamma) \\
&+ \sum\limits_{j\neq i} \: a(b_j) \: q^{*T}_j(\theta, \gamma, b_{-i}) \: w_j(\theta_j, b_j) \\
&\ge a(\gamma) \: q^{*T}_i(\theta, \lambda, b_{-i}) w_i(\theta_i, \gamma) \\
&+ \sum\limits_{j\neq i} \: a(b_j) \: q^{*T}_j(\theta, \lambda, b_{-i}) \: w_j(\theta_j, b_j).
\end{split}
\end{equation}

Similarly, when the type profile is $(\theta, \lambda, b_{-i})$, the maximum value of the objective function in \eqref{qi_opt} is $a(\lambda) \: q^{*T}_i(\theta, \lambda, b_{-i}) w_i(\theta_i, \lambda) + \sum\limits_{j\neq i} \: a(b_j) \: q^{*T}_j(\theta, \lambda, b_{-i}) \: w_j(\theta_j, b_j)$. Therefore, we must have
\begin{equation} \label{Monton_b_b}
\begin{split}
&a(\lambda) \: q^{*T}_i(\theta, \lambda, b_{-i}) w_i(\theta_i, \lambda) \\
&+ \sum\limits_{j\neq i} \: a(b_j) \: q^{*T}_j(\theta, \lambda, b_{-i}) \: w_j(\theta_j, b_j) \\
&\ge a(\lambda) \: q^{*T}_i(\theta, \gamma, b_{-i}) w_i(\theta_i, \lambda) \\
&+ \sum\limits_{j\neq i} \: a(b_j) \: q^{*T}_j(\theta, \gamma, b_{-i}) \: w_j(\theta_j, b_j).
\end{split}
\end{equation}

Now, adding the two inequalities~\eqref{Monton_b_a}-\eqref{Monton_b_b} gives
\begin{equation} \label{Add_aAndb}
\begin{split}
&( w_i(\theta_i, \lambda)a(\lambda)  -   w_i(\theta_i, \gamma)a(\gamma))\: q^{*T}_i(\theta, \lambda, b_{-i})\\
&\ge (  w_i(\theta_i, \lambda)a(\lambda) -   w_i(\theta_i, \gamma)a(\gamma))\: q^{*T}_i(\theta, \gamma, b_{-i}).
\end{split}
\end{equation}
Define:
\begin{equation}\label{z}
\begin{split}
z_1 &\coloneqq ( w_i(\theta_i, \lambda)a(\lambda)  -   w_i(\theta_i, \gamma)a(\gamma))\: q^{*T}_i(\theta, \lambda, b_{-i}), \\
z_2 &\coloneqq (  w_i(\theta_i, \lambda)a(\lambda) -   w_i(\theta_i, \gamma)a(\gamma))\: q^{*T}_i(\theta, \gamma, b_{-i}).
\end{split}
\end{equation}
\eqref{Add_aAndb} says that 
\begin{equation}\label{Add_aAndb2}
z_1 \geq z_2.
\end{equation}

Let us denote $\omega_{\lambda} \coloneqq w_i(\theta_i, \lambda)$ and $\omega_{\gamma} \coloneqq w_i(\theta_i, \gamma)$.  From the generalized monotone hazard rate condition (see \eqref{MHRC_2}), we know that $\omega_{\lambda} > \omega_{\gamma}$.
  From the definition of vector $a(\cdot)$ (see \eqref{ai}) and $q^*$ (see Assumption \ref{assum:Nonz}) and the fact that $\gamma < \lambda$,  it is easy to see that  $a(\lambda) q^{*T}_i(\theta, \lambda, b_{-i}) \geq  a(\gamma) q^{*T}_i(\theta, \lambda, b_{-i})$ and $a(\lambda) q^{*T}_i(\theta, \gamma, b_{-i}) =  a(\gamma) q^{*T}_i(\theta, \gamma, b_{-i})$.

Depending on the values of $a(\lambda) q^{*T}_i(\theta, \lambda, b_{-i})$,  $a(\gamma) q^{*T}_i(\theta, \lambda, b_{-i})$ and $a(\gamma) q^{*T}_i(\theta, \gamma, b_{-i})$, 
$z_1$ and $z_2$ can take the following values:
\begin{equation} \label{z1}
\begin{split}
z_1= \left\{
    \begin{array}{ll}
        0 & \text{if} \minitab a(\lambda) q^{*T}_i(\theta, \lambda, b_{-i}) = 0 \\
        \omega_{\lambda} - \omega_{\gamma} & \text{if} \minitab a(\gamma) q^{*T}_i(\theta, \lambda, b_{-i}) = 1 \\
        \omega_{\lambda} & \text{if} \minitab (a(\lambda) - a(\gamma)) \: q^{*T}_i(\theta, \lambda, b_{-i}) = 1
    \end{array}
\right.,
\end{split}
\end{equation} 
\begin{equation} \label{z2}
\begin{split}
z_2= \left\{
    \begin{array}{ll}
        0 & \text{if} \minitab a(\gamma) q^{*T}_i(\theta, \gamma, b_{-i}) = 0 \\
        \omega_{\lambda} - \omega_{\gamma} &  \text{if} \minitab a(\gamma) q^{*T}_i(\theta, \gamma, b_{-i}) = 1 
    \end{array}
\right..
\end{split}
\end{equation} 
We can establish \eqref{eq:toprove} as follows: The quantities on the left and right hand sides in \eqref{eq:toprove} are either $0$ or $1$.
If $a(\gamma) q^{*T}_i(\theta, \gamma, b_{-i}) = 0$, then \eqref{eq:toprove} is trivially true. It remains to be shown that when $a(\gamma) q^{*T}_i(\theta, \gamma, b_{-i}) = 1$ we also have $a(\lambda) q^{*T}_i(\theta, \lambda, b_{-i}) = 1$. Suppose $a(\gamma) q^{*T}_i(\theta, \gamma, b_{-i}) = 1$   but $a(\lambda) q^{*T}_i(\theta, \lambda, b_{-i}) = 0$.  This  would imply that  $z_2 = \omega_{\lambda} - \omega_{\gamma}$ (which is positive) and $z_1 = 0$; but then $z_1 < z_2$ which is a contradiction of \eqref{Add_aAndb2}. This proves \eqref{eq:toprove}.



Finally, it is straightforward to see that the allocation rule $q^*(\theta, b)$ which is defined in \eqref{qi_opt} as the maximizer of the weighted sum $\sum\limits_{i=1}^N \: a(b_i) q_i^T(\theta, b) w_i(\theta_i, b_i) $, will naturally maximize the second term on the right hand side of  \eqref{TotExpRev}. Moreover, as argued above, $K_i(b_i)=0$ for all $i$ and $b_i$ under $(q^*,t^*)$.  
Hence, the mechanism $(q^*,t^*)$ is a revenue-maximizing  Bayesian incentive compatible and individually rational mechanism.

\vspace{-1em}
\section{Proof of Lemma \ref{lem:MinRemov}} \label{sec:MinRemovPf}
%
%
%
%
%
%
Consider any feasible solution of the optimization problem in \eqref{OptimRemovUsers} denoted as $(\tilde{n}_1, \tilde{n}_2, \cdots, \tilde{n}_k)$. We will now show inductively that:
\vspace{-.6em}
\begin{equation} \label{Proof_ii_L1}
\sum\limits_{j=1}^i (n_j - \tilde{n}_j) \ge \sum\limits_{j=1}^i r_j^* \; , \; \forall \; i = 1, 2, \cdots, k.
\end{equation}

For $i=1$ we have:
\vspace{-.6em}
\begin{equation} \label{Proof_ii_L2}
\tilde{n}_1 \le n_1 \; , \; \tilde{n}_1 \le m_1 \; \implies \; \tilde{n}_1 \le \min\{n_1 , m_1\}.
\end{equation}
From this we can write:
\begin{equation} \label{Proof_ii_L3}
n_1 - \tilde{n}_1 \ge n_1 - \min\{n_1 , m_1\} = (n_1 - m_1)^+ = r_1^*.
\end{equation}

Now suppose the inequality in \eqref{Proof_ii_L1} holds for $i$. We now want to prove it also holds for $i+1$. Let us consider two cases based on the possible values of $r_{i+1}^*$: 1) $r^*_{i+1}=0$ and 2) $r^*_{i+1}>0$. When $r^*_{i+1}=0$, it follows directly from the induction hypothesis for $i$ in \eqref{Proof_ii_L1}  that:
\begin{equation} \label{Proof_ii_L4}
\sum\limits_{j=1}^{i+1} (n_j - \tilde{n}_j) \ge \sum\limits_{j=1}^{i+1} r_j^*.
\end{equation}

%
%

Now consider the case when $r^*_{i+1}>0$. In this case, from the optimization constraint in \eqref{r_jOpt} it can be verified that $r_{i+1}^* = n_{i+1} + \sum\limits_{j=1}^{i}(n_j - r_j^*) - \sum\limits_{j=1}^{i+1} m_j$; hence: 
\vspace{-.7em}
\begin{equation} 
\begin{split}
\sum\limits_{j=1}^{i+1}(n_j - r_j^*) = \sum\limits_{j=1}^{i+1} m_j,
\end{split}
\end{equation}
which implies
\vspace{-.8em}
\begin{equation}\label{Proof_ii_L5} 
 \sum\limits_{j=1}^{i+1} r_j^* = 
\sum\limits_{j=1}^{i+1} (n_j - m_j).
\end{equation}
From the optimization constraints in \eqref{OptimRemovUsers} we know that:
\begin{equation} \label{Proof_ii_L6}
\sum\limits_{j=1}^{i+1} \tilde{n}_j \le \sum\limits_{j=1}^{i+1} m_j.
\end{equation}
Combining \eqref{Proof_ii_L5} and \eqref{Proof_ii_L6} we get:
\begin{equation} \label{Proof_ii_L7}
\sum\limits_{j=1}^{i+1} (n_j - \tilde{n}_j) \ge 
\sum\limits_{j=1}^{i+1} (n_j - m_j) = \sum\limits_{j=1}^{i+1} r^*_j.
\end{equation}

Thus the inequality in \eqref{Proof_ii_L1} holds for $i+1$ as well. Therefore by induction we can conclude that: $\sum\limits_{j=1}^l (n_j - \tilde{n}_j) \ge \sum\limits_{j=1}^l r_j^*,$ for $l=1,\ldots,k$. Thus,  at least $\sum_{j=1}^l r^*_j$  customers  must be removed from the first $l$ classes to satisfy the   inequalities in \eqref{AdeqIneqs} of Lemma \ref{lem:Adeq}.

To show that the  $\sum_{j=1}^k r^*_j$ is minimum value of the integer program in \eqref{OptimRemovUsers}, consider the following procedure:
\begin{enumerate}
\item  Let $\mathcal{L}_1 := \mathcal{C}_1$. From $\mathcal{L}_1$, $r^*_1$ customers  are removed. The set of remaining customers in $\mathcal{L}_1$ is denoted by $\mathcal{N}_1$.
\item Proceed iteratively:  For $2 \leq i \leq k$, given the set $\mathcal{N}_{i-1}$, 
define  $\mathcal{L}_i := \mathcal{N}_{i-1} \bigcup \mathcal{C}_i.$
Remove $r^*_i$ customers from $\mathcal{L}_i$. The set of remaining customers in $\mathcal{L}_i$ is now defined as $\mathcal{N}_i$.
\end{enumerate}

It can be verified that the above procedure removes exactly $\sum\limits_{j=1}^k r_j^*$ customers    and creates a demand profile $\mathbf{\tilde n}$ that meets the adequacy condition $\tilde{\textbf{n}} \prec_w \textbf{m}$. 

\bibliographystyle{IEEEtran}
\bibliography{REF}
\end{document}